\newcommand\blfootnote[1]{%
  \begingroup
  \renewcommand\thefootnote{}\footnote{#1}%
  \addtocounter{footnote}{-1}%
  \endgroup
}
\def\bd{ \textbf{d} } 
\def\bu{ \textbf{u} } 
\def\bp{ \textbf{p} } 
\def\ba{ \textbf{a} } 
\def\bA{ {\mathbf{A}} }
\def\bD{ {\mathbf{D}} }
\def\bI{ {\mathbf{I}} }
\def\bP{ {\mathbf{P}} }
\def\bY{ {\mathbf{Y}} }
\def\bepsilon{ {\mathbf{\epsilon}} }
\def\nn{{ \parallel   }}
\def\RR{{ \mathbb{R}  }}
\def\PP{{ \mathbb{P}  }}
\def\EE{{ \mathbb{E}  }}
\def\NN{{ \mathbb{N}  }}
\def\diag{{ \text{diag}   }}
\def\bx{{ \mathbf{x}  }}
\def\by{{ \mathbf{y}  }}
\def\be{{ \mathbf{e}  }}
\def\bv{{ \mathbf{v}  }}
\newtheorem{theorem}{Theorem}
\newtheorem{lemma}{Lemma}
\newtheorem{corollary}{Corollary}
\newtheorem{proposition}{Proposition}
\newtheorem{assumption}{Assumption}
\begin{document}
\title{On Decentralized Estimation with Active Queries}
\author{Theodoros Tsiligkaridis *, \textit{Member, IEEE}, Brian M. Sadler, \textit{Fellow, IEEE}, Alfred O. Hero III, \textit{Fellow, IEEE}}

\maketitle

\begin{abstract}
We consider the problem of decentralized 20 questions with noise for multiple players/agents under the minimum entropy criterion in the setting of stochastic search over a parameter space, with application to target localization. We propose decentralized extensions of the active query-based stochastic search strategy that combines elements from the 20 questions approach and social learning.
We prove convergence to correct consensus on the value of the parameter. This framework provides a flexible and tractable mathematical model for decentralized parameter estimation systems based on active querying. 
We illustrate the effectiveness and robustness of the proposed decentralized collaborative 20 questions algorithm for random network topologies with information sharing.
\end{abstract}

\begin{keywords}
\noindent Decentralized estimation, active stochastic search, target localization, belief sharing, asymptotic consistency.
\end{keywords}

\blfootnote{The research reported in this paper was supported in part by ARO grant W911NF-11-1-0391.

T. Tsiligkaridis was with the Department of Electrical Engineering and Computer Science, University of Michigan, Ann Arbor, MI 48109 USA. He is now with MIT Lincoln Laboratory, Lexington, MA 02421 USA (email: ttsili@ll.mit.edu).

B. M. Sadler is with the US Army Research Laboratory, Adelphi, MD 20783 USA (email: brian.m.sadler6.civ@mail.mil).

A. O. Hero is with the Department of Electrical Engineering and Computer Science, University of Michigan, Ann Arbor, MI 48109 USA (email: hero@umich.edu).
}

\section{Introduction} \label{sec:intro}
Consider a set of agents that try to estimate a parameter, e.g., estimate a target state or location, collectively. The agents are connected by an information sharing network and can periodically query their local neighbors about the state of the target. Each agent's query is a binary question about target state that is formulated from the agent's local information. This decentralized estimation problem is an extension of the centralized version of the collaborative 20 questions framework studied in \cite{Tsiligkaridis:2013}. Unlike \cite{Tsiligkaridis:2013}, where a global centralized controller jointly or sequentially formulates optimal queries about target state for all agents, in the decentralized problem each agent formulates his own query based on his local information. Thus the decentralized collaborative 20 questions problem is relevant to large scale collaborative target tracking applications where there is no centralized authority. Examples include: object tracking in camera networks \cite{Sznitman:2010}; road tracking from satellite remote sensing networks \cite{Geman:1996}; and wide area surveillance networks \cite{CastroNowak07}.


In this paper we assume that the agents' observations obey noisy query-response models where the queries are functions of agents' local information and successive queries are determined by a feedback control policy. Specifically, in the 20 questions-type model considered in this paper, the observation of each agent is coupled with the query region chosen by that agent, which is a function of its current local belief.

In the framework of \cite{Tsiligkaridis:2013} a controller sequentially selects a set of questions about the target state and uses the noisy responses of the agents to formulate the next set of questions. The questions were binary partitions of the target state space and the agents provided binary answers about the state of the target in the partition. The agents' noisy response models were assumed to be binary symmetric channels (BSC) with known cross-over probability, which could be different for each agent, e.g., to account for a mixture of human and cyber agents. Under general conditions, it was shown that the optimal, entropy-minimizing joint query policy is equivalent to a sequential query policy. Furthermore, this optimal query policy was shown to reduce to a simple bisection rule that equalizes the posterior probabilities, the global belief function, that the target state lies in one of the partition elements. 

In this paper we extend the collaborative 20 questions framework of \cite{Tsiligkaridis:2013} to the decentralized case. The proposed decentralized algorithm consists of two stages: 1) local belief update; and 2) local information sharing. In stage 1 each  agent implements the bisection query policy of \cite{Tsiligkaridis:2013} to update their local belief function. In stage 2 the local belief functions are averaged  over nearest neighborhoods in the information sharing network. The information sharing stage is implemented by a neighborhood averaging rule similar to the social learning model proposed by Jadbabaie, et al., \cite{Jadbabaie:2012}.

We analyze both qualitative and quantitative properties of the two stage decentralized collaborative 20 questions algorithm and establish conditions under which the agents converge to a consensus estimate of the true state. 

The mathematical analysis of algorithm convergence in this paper is inspired by that of Jadbabaie, et al., \cite{Jadbabaie:2012} for social learning. However, we emphasize that the analysis of \cite{Jadbabaie:2012} is not directly applicable and requires significant extension to cover the collaborative 20 questions framework we propose.  First, in the 20 questions framework the target space is continuous as contrasted with the discrete case studied in \cite{Jadbabaie:2012}. Second, the 20 questions framework generates controlled observations that are not independent identically distributed, as required for the analysis in \cite{Jadbabaie:2012}. In particular, the controlled observation model leads to structured time-varying observation densities.

We establish the following theoretical properties of the proposed decentralized collaborative 20 questions algorithm. The first property is proved for states of arbitrary dimension while the second two properties are only proven for scalar states. 
\begin{itemize}
	\item  A positive linear combination of the integrated agents' belief functions over arbitrary sets forms a martingale sequence over time. This fact is the starting point for establishing asymptotic consensus via the martingale convergence theorem.   
	\item The agents asymptotically achieve consensus in their beliefs about target state, i.e., their belief functions converge to the same limit as time progresses (Thm. 1). Thus all agents asymptotically become in agreement about uncertainty in target state.
	\item The agents achieve consensus in the state of the target, i.e., their belief functions asymptotically concentrate on the true target state (Thm. 2).
\end{itemize}

In addition to theoretical analysis of the convergence of the proposed decentralized 20 questions algorithm, numerical simulations of algorithm performance are provided, showing interesting convergence behavior that information sharing brings. For a class of irreducible random graphs, the simulations show that little information sharing improves target localization average and worst-case root mean-square-error (RMSE) significantly when compared to the case of no information sharing, thus improving network-wide estimation performance.  

Our work also differs from the works on 20 questions/active stochastic search of Jedynak, et al., \cite{Jedynak12}, Castro and Nowak \cite{CastroNowak07}, Waeber, et al., \cite{Waeber:2013}, and Tsiligkaridis, et al., \cite{Tsiligkaridis:2013} because we consider intermediate local belief sharing between agents after each local bisection and update. In addition, in contrast to previous work, in the proposed framework each agent incorporates the beliefs of its neighbors in a way that is agnostic of its neighbors' error probabilities. We finally remark that, as compared to \cite{Tsiligkaridis:2013}, the proof of convergence of the proposed algorithm is complicated by the fact that the entropy of the posterior distribution for each agent in the network is not generally monotonically decreasing as a function of iteration. The analysis of \cite{Jadbabaie:2012} does not apply to our model since we consider controlled observations, although we use a form of the social learning model of \cite{Jadbabaie:2012}. 

We remark that our work differs from the large literature on consensus, see Dimakis, et al., \cite{Dimakis:2010} for a survey of gossip algorithms for sensor networks in the context of estimation, source localization and compression. Most of the work on linear consensus focuses on deriving conditions on the connectivity of the network such that all agents converge to the average of a static collection of measurements, along with rate of convergence analysis and development of fast consensus-achieving algorithms. In \cite{Aysal:2009}, randomized gossip broadcast algorithms for consensus were proposed and conditions for reaching consensus on the average value of the initial node measurements were presented. The mean-square error of the randomized averaging procedure was also studied and shown to decay monotonically to a steady-state value. In \cite{Kar:2011}, gossip algorithms for linear parameter estimation were studied and it was shown that, under appropriate conditions on the network structure and observation models, the distributed estimator achieves the same performance as the best centralized linear estimator in terms of asymptotic variance. In contrast, we consider a dynamic set of measurements, as there is novel information (or innovations) at each iteration step, given in the form of binary responses to actively-designed queries based on local agent information. These responses, over time, concentrate the agents' posterior probability distributions on the true target state. We believe that this a significant novel result since consensus-plus-controlled innovation type algorithms have not been rigorously studied in the literature.

The focus of this paper is to obtain a decentralized extension of the centralized collaborative 20 questions problem of \cite{Tsiligkaridis:2013}, and not on extending the analysis of the social learning algorithm of \cite{Jadbabaie:2012}. The decentralized extension enjoys numerous applications in large-scale controlled sensor networks, where sensors spread over wide areas can collaborate in an active manner to localize a target in the presence of errors. Other applications may include extending active testing approaches in the decentralized setting for classification problems, for instance in vision, recommendation systems, and epidemic networks. The 20 questions paradigm is motivated by asking the correct type of questions in the correct order and is applicable to various other domains where computational effort and time are critical resources to manage.

The outline of this paper is as follows. Section \ref{sec:notation} introduces the notation. Section \ref{sec:prior_work} briefly reviews some related prior work. Section \ref{sec:algorithm} introduces the decentralized estimation algorithm and its convergence properties are studied in Section \ref{sec:convergence}. 
The simulations are presented in Section \ref{sec:simulations} followed by our conclusions in Section \ref{sec:conclusions}. The proofs of convergence are given in the appendix.

\section{Notation} \label{sec:notation}
We define $X^*$ the true parameter, the target state in the sequel, and its domain as the unit interval $\mathcal{X}=[0,1]$. Let $\mathcal{B}(\mathcal{X})$ be the set of all Borel-measurable subsets $B \subseteq \mathcal{X}$. Let $\mathcal{N}=\{1,\dots,M\}$ index the $M$ agents in an interaction network, denoted by the vertex set $\mathcal{N}$ and the directed edges joining agents are captured by $E$. The directed graph $G=(\mathcal{N},E)$ captures the possible interactions between agents. Define the (first order) neighborhood in $G$ of agent $i$ as $\mathcal{N}_i=\{j \in \mathcal{N}: (j,i)\in E\}$. Define the probability space $(\Omega, \mathcal F, \mathbb P)$ consisting of the sample space $\Omega$ generating the unknown state $X^*$ and the observations $\{y_{i,t}\}$ at nodes $1\leq i\leq M$ and at times $t=1, 2, \ldots$, an event space $\mathcal F$ and a probability measure $\mathbb P$. The expectation operator $\mathbb E$ is defined with respect to $\mathbb P$.

Define the belief of the $i$-th agent at time $t$ on $\mathcal{X}$ as the posterior density $p_{i,t}(x)$  of target state $x\in \mathcal X$ based on all of the information available to this agent at this time. Define the $M\times 1$ vector $\bp_t(x)=[p_{1,t}(x),\dots,p_{M,t}(x)]^T$ for each $x \in \mathcal{X}$. For any $B\in \mathcal{B}(\mathcal{X})$, define $\bP_t(B)$ as the vector of probabilities with $i$-th element equal to $\int_B p_{i,t}(x) dx$. The interaction matrix $\bA=\{a_{i,j}\}$ (as in \cite{Jadbabaie:2012}) is defined to be any matrix $\bA$ consisting of nonnegative entries where each row sums to 1. We define the query point/target estimate of the $i$-th agent as $\hat{X}_{i,t}$. The query point is the right boundary of the region $A_{i,t}=[0,\hat{X}_{i,t}]$. We let $F_{i,t}(a)=\PP_{i,t}([0,a])=\int_{0}^a p_{i,t}(x) dx$ denote the cumulative distribution function associated with the density $p_{i,t}(\cdot)$.

We assume that each agent $i$ constructs a query at time $t$ of the form ``does $X^*$ lie in the region $A_{i,t} \subset \mathcal{X}$?''. We indicate this query with the binary variable $Z_{i,t}=I(X^* \in A_{i,t})$ to which each  agent $i$ responds with a binary response $Y_{i,t+1}$, which is correct with probability $1-\epsilon_i$, and by assumption $\epsilon_i<1/2$. This model for the error is equivalent to a binary symmetric channel (BSC) with crossover probability $\epsilon_i$. The query region $A_{i,t}$ at time $t$ depends on the accumulated information up to time $t$ at agent $i$. Define the nested sequence of event spaces $\mathcal{F}_t$, $\mathcal{F}_{t-1} \subset \mathcal{F}_{t}$, for all $t\geq 0$, generated by the sequence of queries and responses. The queries $\{A_{i,t}:1\leq i \leq M\}_{t\geq 0}$ are measurable with respect to this filtration. The notation \textit{i.p.} denotes convergence in probability and \textit{a.s.} denotes almost-sure convergence.



\section{Prior Work} \label{sec:prior_work}

\subsection{20 Questions \& Stochastic Search}
The paper by Jedynak, et al., \cite{Jedynak12} formulates the single player 20 questions problem as follows. A controller queries a noisy oracle about whether or not the state of a target $X^*$ lies in a set $A_n \subset \RR$. Starting with a prior distribution on the target's state $p_0(\cdot)$, the objective in \cite{Jedynak12} is to minimize the expected entropy of the posterior distribution:
\begin{equation} \label{eq: min_entr_obj}
	\inf_\pi \EE^\pi\left[ H(p_N) \right]
\end{equation}
where $\pi=(\pi_0,\pi_1,\dots)$ denotes the controller's query policy and the entropy is the standard differential entropy \cite{CoverThomas}:
\begin{equation*}
	H(p) = -\int_{\mathcal{X}} p(x) \log p(x) dx.
\end{equation*}
The posterior median of $p_N$ is used to estimate the target state after $N$ questions. Jedynak \cite{Jedynak12} shows the bisection policy is optimal under the minimum entropy criterion. To be concrete, in Thm. 2 of \cite{Jedynak12}, optimal policies are characterized by:
\begin{equation} \label{eq: optimal_policy}
	\PP_n(A_n) := \int_{A_n} p_n(x) dx = u^* \in \arg \max_{u \in [0,1]} \phi(u)
\end{equation}
where
\begin{equation*}
	\phi(u) = H(f_1 u + (1-u) f_0) - u H(f_1) - (1-u) H(f_0)
\end{equation*}
is nonnegative. The densities $f_0$ and $f_1$ correspond to the noisy channel \footnote{The function $I(A)$ denotes the indicator function throughout the paper-i.e., $I(A)=1$ if $A$ is true and zero otherwise.}:
\begin{equation*}
	\PP(Y_{n+1}=y|Z_n=z) = \left\{\begin{array}{ll} f_1(y), & z=1 \\ f_0(y), & z=0 \end{array} \right.
\end{equation*}
where $Z_n=I(X^*\in A_n) \in \{0,1\}$ is the channel input. 
The noisy channel models the conditional probability of the response to each question being correct. For the special case of a binary symmetric channel (BSC), $u^* = 1/2$ and the probabilistic bisection policy \cite{Jedynak12, CastroNowak07} becomes an optimal policy. This algorithm provides an adaptive design for the sequence of questions for all agents in the network.

In \cite{Tsiligkaridis:2013} the approach of \cite{Jedynak12} was extended to the case of multi-agent query strategies, denoted in \cite{Tsiligkaridis:2013} as collaborative 20 questions. In \cite{Tsiligkaridis:2013}, optimality conditions are derived for optimal query strategies in the collaborative multiplayer case where observations are communicated to a fusion center (or centralized controller) and were shown to generalize the probabilistic bisection policy. Two policies were studied; a sequential bisection policy for which each player responds to a single question about the state of the target, and a joint policy where all players are asked questions simultaneously. 
It was proven that the maximum entropy reduction for the sequential bisection scheme is the same as that of the jointly optimal scheme, and is given by the sum of the capacities of all the players' channels. Thus, the centralized controller is equivalent to a cascade of low-complexity controllers. 
Despite the fact that the optimal sequential policy has access to a more refined filtration, it achieves the same average performance as the optimal joint policy. This equivalence was also extended to the setting where the error channels associated with the players are unknown. 

%

\subsection{Non-Bayesian Social Learning}
In Jadbabaie, et al., \cite{Jadbabaie:2012} it is assumed that $\Theta$ denotes a finite set of possible states of the world and the objective is to study conditions for asymptotic agreement on the true state of the world, denoted by $\theta^*$. A set $\mathcal{N}=\{1,\dots,M\}$ of agents interacting over a social network (directed graph) $G=(\mathcal{N},E)$ is considered, where $E$ encodes the edges between agents. An edge connecting agent $i$ and agent $j$ is the ordered pair $(i,j)\in E$, denoting that agent $j$ has access to the belief of agent $i$. The interactions are captured by an interaction matrix $\bA$, where $a_{i,j}$ denotes the strength associated with the communication of agent $j$'s belief to agent $i$.

The belief of agent $i$ at time $t\geq 0$, defined on $\Theta$, is denoted by $p_{i,t}(\theta)$. Conditioned on the state of the world $\theta$, at each time $t\geq 1$, an observation set $\by_t=(y_{1,t},\dots,y_{M,t})$ is generated by the likelihood function $l(\cdot|\theta)$. The signal $y_{i,t}\in \mathcal{Y}$ is a private signal observed by agent $i$ at time $t$ and $\mathcal{Y}$ is a finite set. Independence across time is also assumed.

The notion of observational equivalence is key to the results derived in \cite{Jadbabaie:2012}, which are related to identifiability. Two states are observationally equivalent from the point of view of an agent if the likelihood of the two states are identical. More specifically, elements of the set $\Theta_i^\theta=\{\tilde{\theta}\in \Theta: l_i(y|\tilde{\theta})=l_i(y|\theta), \forall y \in \mathcal{Y}\}$ are observationally equivalent to state $\theta$ from the point of view of agent $i$.

The belief of each agent $i$ is updated by neighborhood averaging of the form:
\begin{equation} \label{eq:update}
	p_{i,t+1}(\theta) = a_{i,i} p_{i,t}(\theta) \frac{l_i(y_{i,t+1}|\theta)}{\mathcal{Z}_{i,t}(y_{i,t+1})} + \sum_{j \in \mathcal{N}_i} a_{i,j} p_{j,t}(\theta)
\end{equation}
where $\mathcal{N}_i=\{j \in \mathcal{N}: (j,i)\in E\}$ is the neighborhood set of agent $i$. The denominator $\mathcal{Z}_{i,t}(y_{i,t+1})$ is the normalizing factor of the Bayesian update given by $\mathcal{Z}_{i,t}(y_{i,t+1})=\sum_{\theta\in \Theta} p_{i,t}(\theta) l_i(y_{i,t+1}|\theta)$. The parameters $a_{i,i}$ are called the self-reliances that indicates the relative value of self belief, while the $a_{i,j}$ are the relative values placed on the neighbors' beliefs by agent $i$. As noted in \cite{Jadbabaie:2012}, although the first term in (\ref{eq:update}) is the Bayesian update of the local belief function, the second term is a linear combination of the neighboring beliefs. Equation (\ref{eq:update}) is not a true Bayesian belief update but is considerably simpler to implement. 

In Proposition 3 of \cite{Jadbabaie:2012}, the following assumptions are stated:
\begin{itemize}
	\item strong network connectivity (i.e., there exists a directed path from every agent to any other agent)
	\item $a_{i,i}>0, \forall i$.
	\item $\exists i$ such that $p_{i,0}(\theta^*)>0$.
	\item $\nexists \theta \neq \theta^*$ that is observationally equivalent to $\theta^*$ from the point of view of all agents in the network.
\end{itemize}
Under these assumptions, it is proven in \cite{Jadbabaie:2012} that all agents in the network learn the true state of the world almost surely-i.e., $p_{i,t}(\theta^*) \to 1$ with probability 1 for all $i\in \mathcal{N}$ as $t\to\infty$. 


\section{Decentralized Collaborative 20 Questions for Target Localization} \label{sec:algorithm}

Motivated by the work of \cite{Tsiligkaridis:2013} and \cite{Jadbabaie:2012}, we proceed as follows. Starting with a prior distribution $p_{i,0}(x)$ on the true target state $X^*$, the aim is to reach consensus on the correct state $X^*$ across the network through repeated querying and information sharing. Our proposed decentralized collaborative 20 questions target localization algorithm consists of two stages. Motivated by the optimality of the bisection rule for symmetric channels proved by Jedynak, et al., \cite{Jedynak12}, the first stage of the decentralized estimation algorithm is to bisect the posterior of each agent $i\in \mathcal{N}$ at $\hat{X}_{i,t}$ and refine its own belief through Bayes' rule. The second stage consists of each agent averaging its neighbor's beliefs and its own. This is repeated until convergence. The matrix $\bA$ contains the weights for collaboration between agents and are allowed to be zero  when there is no edge in the information sharing network $G$; if $a_{i,j}=0$, then agent $i$ cannot directly observe information from agent $j$ at any time. Algorithm 1 gives the details of the proposed two stage implementation.
\begin{algorithm}
\caption{ Decentralized Estimation Algorithm }
\label{alg:alg1}
\begin{algorithmic}[1]
\STATE \textbf{Input:}  {$G=(\mathcal{N},E), \bA=\{a_{i,j}: (i,j)\in \mathcal{N}\times \mathcal{N}\}, \{\epsilon_i: i\in \mathcal{N}\}$}
\STATE \textbf{Output:} {$\{\check{X}_{i,t}:i\in \mathcal{N}\}$}

	\STATE Initialize $p_{i,0}(\cdot)$ to be positive everywhere.
	
	\REPEAT
		\STATE For each agent $i\in \mathcal{N}$: \\
		\STATE	\quad Bisect posterior density: $\PP_{i,t}(A_{i,t})=1/2$.  \\
		\STATE	\quad Obtain (noisy) binary response $y_{i,t+1} \in \{0,1\}$. \\
	  \STATE	\quad Belief update: \\
		\begin{align} 
			p_{i,t+1}(x) &= a_{i,i} p_{i,t}(x) \frac{l_i(y_{i,t+1}|x,A_{i,t})}{\mathcal{Z}_{i,t}(y_{i,t+1})} \nonumber  \\
				&\quad + \sum_{j \in \mathcal{N}_i} a_{i,j} p_{j,t}(x), \qquad x\in \mathcal{X} \label{eq:density_update}
		\end{align}
		where the observation probability mass function (p.m.f.) is:
		\begin{align}
			l_i(y|x,A_{i,t}) &= f_1^{(i)}(y) I(x \in A_{i,t}) + f_0^{(i)}(y) I(x \notin A_{i,t}), \nonumber \\
				&\qquad y \in \mathcal{Y}  \label{eq:observation_density}
		\end{align}
		and $f_1^{(i)}(y)=(1-\epsilon_i)^{I(y=1)} \epsilon_i^{I(y=0)}, f_0^{(i)}(y)=1-f_1^{(i)}(y)$. \\
		\STATE  \quad Calculate target estimate: $\check{X}_{i,t} = \int_{\mathcal{X}} x p_{i,t}(x) dx$.
	\UNTIL {convergence}
\end{algorithmic}
\end{algorithm}

Some simplifications occur in Algorithm 1. The normalizing factor $\mathcal{Z}_{i,t}(y)$ is given by $\int_{\mathcal{X}} p_{i,t}(x) l_i(y|x,\hat{X}_{i,t}) dx$ and can be shown to be equal to $1/2$ (see proof of Lemma \ref{lemma:lemmaA} in Appendix A). The bisection query points are medians $\hat{X}_{i,t}=F_{i,t}^{-1}(1/2)$ and the observation distribution becomes:
\begin{equation*}
	l_i(y|x,\hat{X}_{i,t}) = f_1^{(i)}(y) I(x \leq \hat{X}_{i,t}) + f_0^{(i)}(y) I(x > \hat{X}_{i,t}).
\end{equation*}
where the distributions $f_z^{(i)}(\cdot)$ are defined in (\ref{eq:BSC}). We note that the conditioning on the query region $A_{i,t}$ (or query point $\hat{X}_{i,t}$) is necessary as the binary observation $\by$ is linked to the query in the 20 questions model, in which the correct answer is obtained with probability $1-\epsilon_i$ and the wrong answer is obtained with probability $\epsilon_i$ (also see Assumptions 1,2 in Section \ref{sec:convergence}). An example of this observation density is illustrated in Fig. \ref{fig:observation_density}.
\begin{figure}[ht]
	\centering
		\includegraphics[width=0.45\textwidth]{./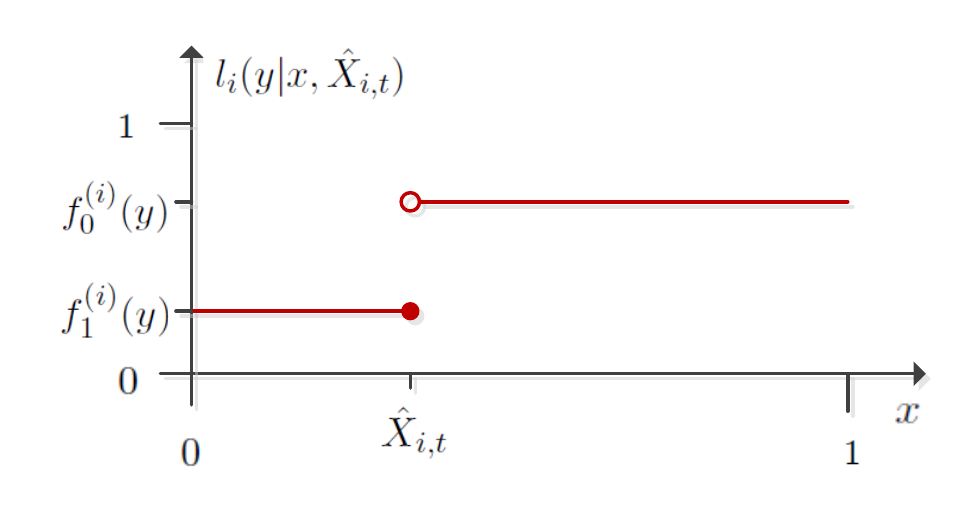}
	\caption{ An example of a time-varying observation density based on query point $\hat{X}_{i,t}$ in one dimension. }
	\label{fig:observation_density}
\end{figure}
We note two important differences between our density update (\ref{eq:density_update}) and the update (\ref{eq:update}). The density $l_i(y|x,\hat{X}_{i,t})$ depends on the query point $\hat{X}_{i,t}$, which is time-varying and as a result, the density $l_i(y|x,\hat{X}_{i,t})$ is time-varying, unlike the time-invariant case in (\ref{eq:update}). Thus, the identifiability assumptions made in \cite{Jadbabaie:2012} are not applicable for our problem. In addition the update (\ref{eq:density_update}) holds pointwise for every $x\in \mathcal{X}$ and the sequence $\{p_{i,t}(x)\}_{t\geq 0}$ may not be bounded as $t\to\infty$ for a fixed $x\in \mathcal{X}$, unlike the discrete case in (\ref{eq:update}). In the next section we consider the basic convergence properties of the decentralized estimation algorithm driven by actively controlled queries.

\section{Convergence of Decentralized Algorithm} \label{sec:convergence}

In this section convergence properties of Algorithm 1 are established under the assumptions below. The two main theoretical results,  Thm. 1 and Thm. 2, establish that the proposed algorithm attains asymptotic agreement (consensus) and asymptotic consistency, respectively. A number of technical lemmas are necessary and are proven in the appendices. A block diagram showing the interdependencies between the lemmas and theorems in this section is shown in Fig. \ref{fig:guide}.
\begin{figure}[ht]
	\centering
		\includegraphics[width=0.30\textwidth]{./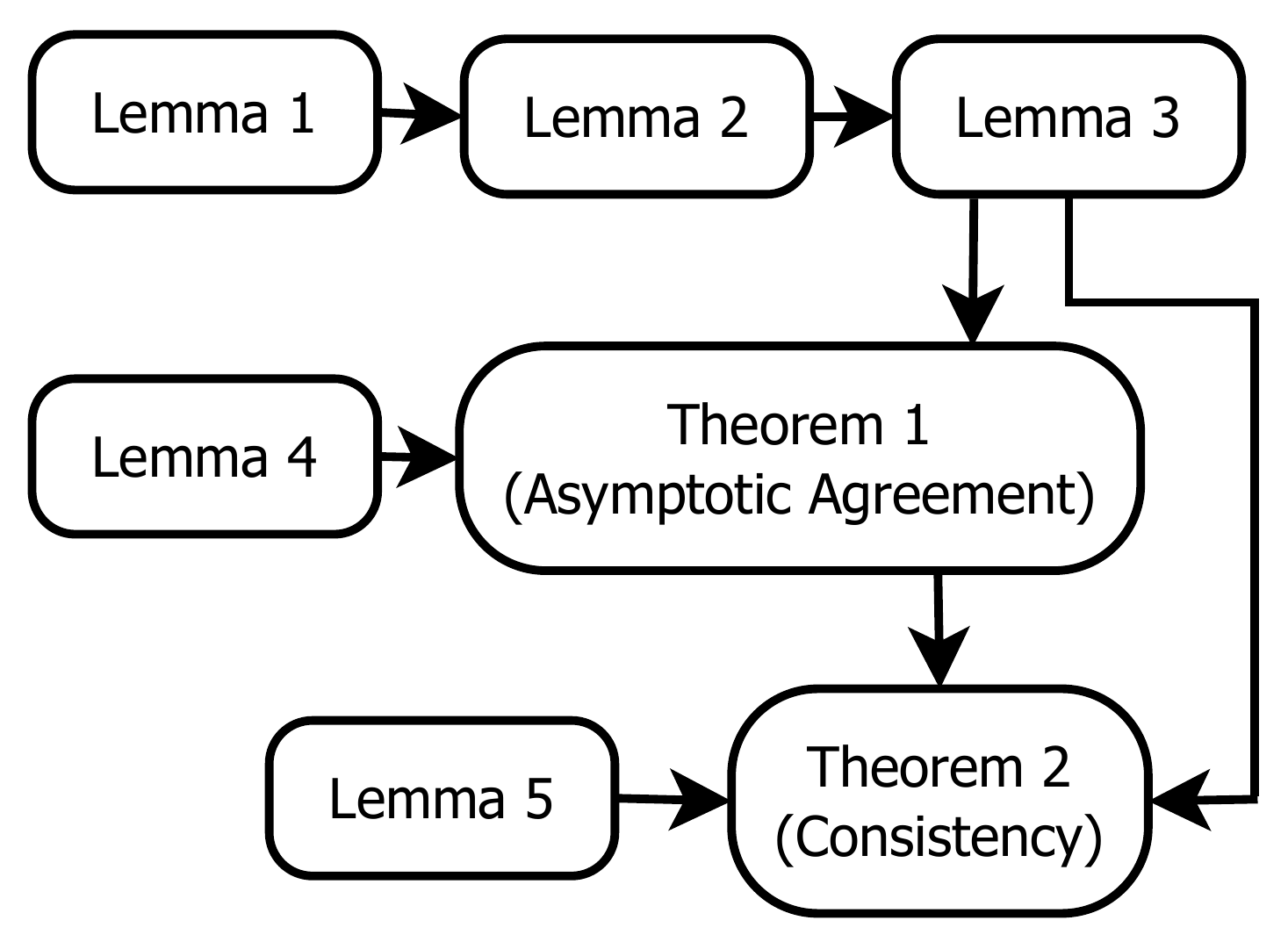}
	\caption{ The flow of the analysis for establishing convergence of the proposed decentralized 20-questions algorithm to the correct consensus limit. }
	\label{fig:guide}
\end{figure}

To simplify the analysis of Algorithm \ref{alg:alg1}, we make the following assumptions.  These assumptions are no stronger than those made in \cite{Tsiligkaridis:2013} and \cite{Jadbabaie:2012}.

\begin{assumption} (Conditional Independence) \label{assump:cond_indep}
We assume that the players' responses are conditionally independent. In particular,
\begin{equation} \label{eq:indep}
	\PP(\bY_{t+1}=\by|\mathcal{F}_t) = \prod_{i=1}^M \PP(Y_{i,t+1}=y_i|\mathcal{F}_t)
\end{equation}
and each player's response is governed by the conditional distribution:
\begin{align}
	l_i(y_i|x,A_{i,t}) &:= \PP(Y_{i,t+1}=y_i|A_{i,t},X^*=x) \nonumber \\
		&= \left\{ \begin{array}{ll} f_1^{(i)}(y_i), & x\in A_{i,t} \\ f_0^{(i)}(y_i), & x\notin A_{i,t} \end{array} \right.  \label{eq:exp}
\end{align}
\end{assumption}

\begin{assumption} (Memoryless Binary Symmetric Channels) \label{assump:BSC}
	We model the players' responses as independent (memoryless) binary symmetric channels (BSC) \cite{CoverThomas} with crossover probabilities $\epsilon_i\in (0,1/2)$. The probability mass function $f_z^{(i)}(Y_{i,t+1})=\PP(Y_{i,t+1}|Z_{i,t}=z)$ is:
\begin{equation} \label{eq:BSC}
	f_z^{(i)}(y_i) = \Bigg\{ \begin{array}{ll} 1-\epsilon_i, & y_i=z \\ \epsilon_i, & y_i\neq z \end{array}
\end{equation}
for $i=1,\dots,M, z=0,1$. The assumption $\epsilon_i<1/2$ implies that the response of each agent $i$ is almost correct.
\end{assumption}

\begin{assumption} (Strong Connectivity \& Positive Self-reliances) \label{assump:strong_connectivity}
	As in \cite{Jadbabaie:2012}, we assume that the network is strongly connected and all self-reliances $a_{i,i}$ are strictly positive. \footnote{Theorems 1 and 2 remain valid under the weaker assumption that the network is a disjoint union of strongly connected subnetworks. However, for simplicity we assume the whole network is strongly connected. } The strong connectivity assumption implies that the interaction matrix $\bA$ is irreducible. An example of a strongly connected network is shown in the figure below. 
\end{assumption}

\noindent\begin{minipage}{0.45\textwidth}
	\centering
	{\includegraphics[width=0.50\textwidth]{./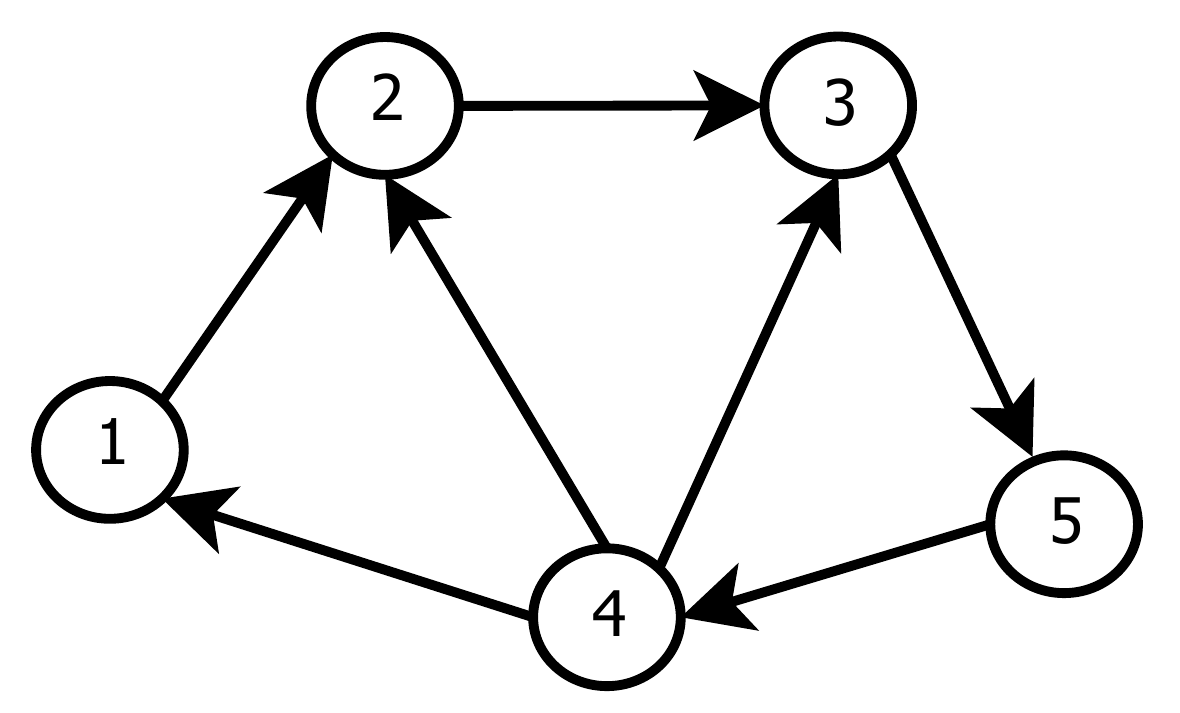}}
	\label{fig:DG}
\end{minipage}
\begin{minipage}{0.45\textwidth}
	\centering
	\begin{equation*}
			\bA = \begin{bmatrix} a_{1,1} & 0 & 0 & a_{1,4} & 0 \\ a_{2,1} & a_{2,2} & 0 & a_{2,4} & 0 \\ 0 & a_{3,2} & a_{3,3} & a_{3,4} & 0 \\ 0 & 0 & 0 & a_{4,4} & a_{4,5} \\ 0 & 0 & a_{5,3} & 0 & a_{5,5}  \end{bmatrix} 
	\end{equation*}
\end{minipage}

The starting point for studying the convergence of Algorithm 1 is Equation (\ref{eq:density_update}), which propagates the vector of belief functions forward in time. The density evolution (\ref{eq:density_update}) can be concisely written in matrix form as:
\begin{equation} \label{eq:density_update_matrix}
	\bp_{t+1}(x) = (\bA + \bD_t(x)) \bp_t(x), \qquad x\in \mathcal{X}
\end{equation}
where $\bA$ is the time-invariant interaction matrix and $\bD_t(x)$ is a diagonal time-varying matrix dependent on the responses $\by_{t+1}=(y_{1,t+1},\dots,y_{M,t+1})$, the query regions $A_{i,t} \subset \mathcal{X}$ and the state $x \in \mathcal{X}$. The $i$th diagonal entry of $\bD_t(x)$ is given by:
\begin{equation} \label{eq:D_def}
	[\bD_t(x)]_{i,i} = a_{i,i} \left( \frac{l_i(y_{i,t+1}|x,A_{i,t})}{\mathcal{Z}_{i,t}(y_{i,t+1})} - 1 \right)
\end{equation}
We remark that the convergence analysis result of Jadbabaie, et al., \cite{Jadbabaie:2012} does not apply here since the distributions $l_i(\cdot|x,A_{i,t})$ are time-varying because the query regions $A_{i,t}$ are time-varying.

Proposition \ref{prop:Amat} provides bounds on the dynamic range of $\bA\bx$, where $\bx$ is any arbitrary vector. The coefficient of ergodicity of an interaction matrix $\bA$ is defined as \cite{Seneta:1981, Ipsen:2011}:
\begin{equation} \label{eq:coeff_ergodicity}
	\tau_1(\bA) = \frac{1}{2} \max_{i\neq j} \nn \bA^T(\be_i-\be_j)\nn_1 = \frac{1}{2} \max_{i\neq j} \sum_{l=1}^M |a_{i,l}-a_{j,l}|
\end{equation}
This coefficient does not exceed $1$. The most non-ergodic interaction matrix is the identity matrix $\bA=\bI_M$, for which $\tau_1(\bA)=1$ and there is no information sharing. As another example, a matrix $\bA$ with fixed self-reliances $\alpha \in (0,1)$ and uniform off-diagonal weights-i.e., $\frac{1-\alpha}{M-1}$, has $\tau_1(\bA)=|\alpha-\frac{1-\alpha}{M-1}|$.

\begin{proposition} (Contraction Property of $\bA$) \label{prop:Amat}
	Assume $\bA=\{a_{i,j}\}$ is a $M\times M$ stochastic matrix. Let $\bx$ be an arbitrary non-negative vector. Then, we have for all pairs $(i,j)$:
	\begin{equation*}
		[\bA\bx]_i-[\bA\bx]_j \leq \tau_1(\bA) \left( \max_i x_i - \min_i x_i \right)
	\end{equation*}
\end{proposition}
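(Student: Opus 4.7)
The plan is to exploit the standard trick behind Dobrushin's ergodicity coefficient: rewrite the difference $[\bA\bx]_i-[\bA\bx]_j=\sum_l (a_{i,l}-a_{j,l})x_l$ and split the sum according to the sign of $a_{i,l}-a_{j,l}$. Let $L^+=\{l:a_{i,l}\geq a_{j,l}\}$ and $L^-=\{l:a_{i,l}<a_{j,l}\}$, and set $S^+=\sum_{l\in L^+}(a_{i,l}-a_{j,l})$ and $S^-=\sum_{l\in L^-}(a_{j,l}-a_{i,l})$.

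The first key observation is that because $\bA$ is row-stochastic, $\sum_l(a_{i,l}-a_{j,l})=1-1=0$, which forces $S^+=S^-$. Combined with $S^++S^-=\sum_l|a_{i,l}-a_{j,l}|$, this gives the clean identity
\begin{equation*}
S^+ = S^- = \tfrac{1}{2}\sum_{l=1}^M |a_{i,l}-a_{j,l}| \;\leq\; \tau_1(\bA),
\end{equation*}
by the definition of $\tau_1(\bA)$ in \eqref{eq:coeff_ergodicity}.

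Next I would bound the two pieces of the sum using the extremes of $\bx$. Writing $M_x=\max_k x_k$ and $m_x=\min_k x_k$, nonnegativity of $\bx$ (which is needed to keep sign control in the estimates below, although actually only the ordering matters) gives
\begin{equation*}
[\bA\bx]_i-[\bA\bx]_j \;=\; \sum_{l\in L^+}(a_{i,l}-a_{j,l})x_l \;-\; \sum_{l\in L^-}(a_{j,l}-a_{i,l})x_l \;\leq\; M_x\,S^+ - m_x\,S^-.
\end{equation*}
Substituting $S^+=S^-\leq \tau_1(\bA)$ yields $(M_x-m_x)S^+\leq \tau_1(\bA)(M_x-m_x)$, which is the claimed bound.

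There is essentially no hard step here; the only thing to be a little careful about is the sign-splitting together with the row-sum-zero cancellation, which is what makes the constant $\tfrac12$ in the definition of $\tau_1(\bA)$ (rather than the full $\ell_1$ norm of the difference row) the right one. Nonnegativity of $\bx$ is used only so that the estimates $x_l\leq M_x$ on $L^+$ and $x_l\geq m_x$ on $L^-$ combine cleanly into $M_x-m_x$; the same argument actually works for any real vector since shifting $\bx\mapsto \bx-c\mathbf 1$ leaves both sides unchanged (as $\bA\mathbf 1=\mathbf 1$).
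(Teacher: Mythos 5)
Your proof is correct. Note that the paper does not actually prove this proposition; it simply cites Theorem 3.1 of Seneta's book, and the argument you give — splitting $\sum_l (a_{i,l}-a_{j,l})x_l$ by sign, using row-stochasticity to get $S^+=S^-=\tfrac12\sum_l|a_{i,l}-a_{j,l}|\leq\tau_1(\bA)$, and bounding the two pieces by $\max_k x_k$ and $\min_k x_k$ — is precisely the standard Dobrushin-coefficient proof underlying that citation, so you have supplied a correct self-contained derivation of the same result. Your closing observation is also right: nonnegativity of $\bx$ is never used, since $x_l\leq\max_k x_k$ and $x_l\geq\min_k x_k$ hold for any real vector, and the statement is invariant under the shift $\bx\mapsto\bx-c\mathbf{1}$ because $\bA\mathbf{1}=\mathbf{1}$.
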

For a proof, see Theorem 3.1 in \cite{Seneta:1981}.

	%


Note that irreducibility of the matrix $\bA$ implies that there exists $r$ such that $\bA^r$ is a stochastic matrix with positive entries \cite{Seneta:1981}, further implying that $\tau_1(\bA^r)<1$. Therefore, under Assumption \ref{assump:strong_connectivity}, Proposition \ref{prop:Amat} will establish a contraction property required for the convergence proof of Theorem 1.

Next, we recall a tight smooth approximation to the non-smooth maximum and minima operators. Similar results have appeared in Prop. 1 in \cite{Chen:2013} and p. 72 in \cite{Boyd:ConvexOptimization}. Consensus will be proven in Theorem 1 by showing that the dynamic range of the beliefs of the network converges to zero asymptotically as the number of algorithm iterations grow. Since the dynamic range involves taking the difference between the maximum and minimum of the CDF's evaluated at an arbitrary point, Proposition \ref{prop:lse} will be used in the proof of Theorem 1 to approximate the maximum and minimum involved.
\begin{proposition} \label{prop:lse} (Tight Smooth Approximation to Maximum/Minimum Operator)
	Let $\ba \in \RR^M$ be an arbitrary vector. Then, we have for all $k>0$:
	\begin{equation} \label{eq:lse_approx_max}
		\max_i a_i \leq \frac{1}{k} \log\left( \sum_{i=1}^M \exp(k a_i) \right) \leq \max_i a_i + \frac{\log M}{k}
	\end{equation}
	and
	\begin{equation} \label{eq:lse_approx_min}
		\min_i a_i \geq -\frac{1}{k} \log\left( \sum_{i=1}^M \exp(-k a_i) \right) \geq \min_i a_i - \frac{\log M}{k}
	\end{equation}
\end{proposition}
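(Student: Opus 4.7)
The plan is to prove Proposition \ref{prop:lse} by a direct monotonicity argument: since $\log$ and the exponential are monotone, the log-sum-exp function sits naturally between $\max_i a_i$ and $\max_i a_i + (\log M)/k$, and the claim for the minimum follows by symmetry.

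First I would handle the upper bound for the maximum. Let $a_* = \max_i a_i$. On one hand, the sum $\sum_i \exp(k a_i)$ dominates its largest term, giving
\[
\sum_{i=1}^M \exp(k a_i) \geq \exp(k a_*),
\]
and taking $\frac{1}{k}\log(\cdot)$ of both sides (valid since $k>0$ and $\log$ is monotone increasing) yields the left inequality of \eqref{eq:lse_approx_max}. On the other hand, every term is bounded by $\exp(k a_*)$, so
\[
\sum_{i=1}^M \exp(k a_i) \leq M \exp(k a_*),
\]
and taking $\frac{1}{k}\log(\cdot)$ gives the right inequality of \eqref{eq:lse_approx_max}.

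For the min inequalities in \eqref{eq:lse_approx_min}, I would simply apply \eqref{eq:lse_approx_max} to the vector $-\ba \in \RR^M$, using the identity $\max_i (-a_i) = -\min_i a_i$. Substituting into \eqref{eq:lse_approx_max} and multiplying the resulting chain of inequalities by $-1$ (which reverses the inequalities) delivers \eqref{eq:lse_approx_min} immediately.

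There is no real obstacle here: the argument is a one-line monotonicity estimate bracketing the sum between its largest term and $M$ times that term, and then taking logs. The only care needed is to keep track of the sign reversal when transferring from the max-version to the min-version of the bound.
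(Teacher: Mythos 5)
Your proof is correct: bracketing the sum between its largest term and $M$ times that term and then taking $\frac{1}{k}\log(\cdot)$, with the min case obtained by applying the max bound to $-\ba$ and flipping signs, is exactly the standard argument behind this proposition, which the paper itself does not prove but simply cites (Prop. 1 of Chen et al. and p. 72 of Boyd--Vandenberghe). No gaps.
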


The next lemma shows that the integrated innovation term in (\ref{eq:density_update_matrix}) has a conditional mean of zero, which will be used to prove the martingale property in Lemma \ref{lemma:lemmaB}.
\begin{lemma} \label{lemma:lemmaA}
	Consider Algorithm \ref{alg:alg1}. Let $B \in \mathcal{B}(\mathcal{X})$. Then, we have:
	\begin{equation*}
		\EE\left[\int_B \bD_t(x) \bp_t(x) dx \Bigg| \mathcal{F}_t \right] = 0.
	\end{equation*}
	where $\bD_t(x)$ was defined in (\ref{eq:D_def}).
\end{lemma}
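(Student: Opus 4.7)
The plan is a coordinate-wise calculation that rests on two ingredients: the bisection rule, which makes the normalizing constant $\mathcal{Z}_{i,t}(y_{i,t+1})$ independent of $y_{i,t+1}$, and the one-step Bayesian predictive identity, which makes the conditional mean of the remaining random factor vanish.

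First I would discharge the auxiliary claim that $\mathcal{Z}_{i,t}(y)=1/2$ for both $y\in\{0,1\}$. Splitting the defining integral $\int_{\mathcal{X}} p_{i,t}(x) l_i(y|x,A_{i,t})\,dx$ over $A_{i,t}$ and its complement gives
\[
\mathcal{Z}_{i,t}(y)=f_1^{(i)}(y)\,\PP_{i,t}(A_{i,t})+f_0^{(i)}(y)\bigl(1-\PP_{i,t}(A_{i,t})\bigr),
\]
and the bisection identity $\PP_{i,t}(A_{i,t})=1/2$ together with $f_0^{(i)}(y)+f_1^{(i)}(y)=1$ collapses this to $1/2$. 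Consequently, the $i$-th diagonal entry of $\bD_t(x)$ simplifies to $a_{i,i}\bigl(2 l_i(y_{i,t+1}|x,A_{i,t})-1\bigr)$.

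Next I would take the conditional expectation. Given $\mathcal{F}_t$, the region $A_{i,t}$ and the density $p_{i,t}$ are measurable, so the only randomness is $y_{i,t+1}$, whose conditional law is the Bayesian predictive $\mathcal{Z}_{i,t}(\cdot)=1/2$ on $\{0,1\}$. For any fixed $x$, setting $z=I(x\in A_{i,t})$ and using that $f_z^{(i)}(\cdot)$ is a PMF on $\{0,1\}$, one has $\EE\bigl[l_i(y_{i,t+1}|x,A_{i,t})\bigm|\mathcal{F}_t\bigr]=\tfrac{1}{2}\bigl(f_z^{(i)}(0)+f_z^{(i)}(1)\bigr)=\tfrac{1}{2}$, so that $\EE[[\bD_t(x)]_{i,i}\mid\mathcal{F}_t]=0$ pointwise in $x$. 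Because $\mathcal{X}=[0,1]$ is bounded and the integrand is uniformly bounded, Fubini permits exchanging integration and conditional expectation, yielding
\[
\EE\Bigl[\int_B [\bD_t(x)]_{i,i}\, p_{i,t}(x)\, dx\Bigm|\mathcal{F}_t\Bigr]=\int_B p_{i,t}(x)\,\EE\bigl[[\bD_t(x)]_{i,i}\bigm|\mathcal{F}_t\bigr] dx=0,
\]
and stacking over coordinates $i=1,\dots,M$ gives the claimed zero vector.

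The main subtlety I expect to handle with care is justifying the predictive identity $\PP(y_{i,t+1}=y\mid\mathcal{F}_t)=\mathcal{Z}_{i,t}(y)$. This follows from Assumptions \ref{assump:cond_indep}--\ref{assump:BSC} by marginalizing the BSC likelihood $l_i(y|x,A_{i,t})$ against the density $p_{i,t}$, the latter playing the role of agent $i$'s conditional law of $X^*$ given $\mathcal{F}_t$ in the construction of the algorithm. Once that bookkeeping is in place, the vector identity is immediate from the two-line computation above, and the result will then feed directly into the martingale property established in Lemma \ref{lemma:lemmaA} (or rather its companion Lemma \ref{lemma:lemmaA}).
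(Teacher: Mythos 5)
Your proposal is correct and takes essentially the same route as the paper: both hinge on the bisection identity $\PP_{i,t}(A_{i,t})=1/2$, which yields $\mathcal{Z}_{i,t}(y)=1/2$ and the uniform predictive law $\PP(Y_{i,t+1}=y\mid\mathcal{F}_t)=1/2$, the only cosmetic difference being that you annihilate $\EE\bigl[[\bD_t(x)]_{i,i}\mid\mathcal{F}_t\bigr]$ pointwise in $x$ and then exchange integration and conditional expectation, whereas the paper integrates over $B$ first (splitting into $B\cap A_{i,t}$ and $B\cap A_{i,t}^c$) and then averages over $y$. One trivial slip in your closing remark: the martingale companion result you feed into is Lemma \ref{lemma:lemmaB}, not Lemma \ref{lemma:lemmaA}.
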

\begin{proof}
	See Appendix A.
\end{proof}

Lemma \ref{lemma:lemmaB} shows that a positive linear combination of the integrated posterior distributions forms a martingale sequence. This will allow us to use the martingale convergence theorem to prove Lemma \ref{lemma:lemmaC} (a key lemma in proving the main convergence Theorems 1 and 2).
\begin{lemma} \label{lemma:lemmaB}
	Consider Algorithm \ref{alg:alg1}. Let $B \in \mathcal{B}(\mathcal{X})$. Then, we have $\EE[\bv^T\bP_{t+1}(B)|\mathcal{F}_t]=\bv^T\bP_t(B)$ for some positive vector $v\succ 0$, and $\lim_{t\to\infty} \bv^T\bP_{t}(B)$ exists almost surely.
\end{lemma}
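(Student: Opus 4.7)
The plan is to integrate the matrix-form density recursion \eqref{eq:density_update_matrix} over $B$, take conditional expectation to extract a linear recursion in the vector $\bP_t(B)$, then pick $\bv$ as a suitable left eigenvector of $\bA$ so that the inner product $\bv^T\bP_t(B)$ becomes a bounded martingale.

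First I would integrate \eqref{eq:density_update_matrix} over $B\in\mathcal{B}(\mathcal{X})$ to obtain
\begin{equation*}
\bP_{t+1}(B) \;=\; \bA\,\bP_t(B) \;+\; \int_B \bD_t(x)\,\bp_t(x)\,dx.
\end{equation*}
Since $\bA$ and $\bP_t(B)$ are $\mathcal{F}_t$-measurable, taking $\EE[\,\cdot\,|\mathcal{F}_t]$ on both sides and applying Lemma \ref{lemma:lemmaA} kills the integrated innovation term, leaving
\begin{equation*}
\EE[\bP_{t+1}(B)\mid \mathcal{F}_t] \;=\; \bA\,\bP_t(B).
\end{equation*}
Hence for any vector $\bv$, $\EE[\bv^T\bP_{t+1}(B)\mid\mathcal{F}_t]=(\bv^T\bA)\,\bP_t(B)$, and the desired martingale identity $\EE[\bv^T\bP_{t+1}(B)\mid\mathcal{F}_t]=\bv^T\bP_t(B)$ reduces to requiring $\bv^T\bA=\bv^T$, i.e.\ $\bv$ must be a left eigenvector of $\bA$ with eigenvalue $1$.

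Next I would justify existence of such a strictly positive $\bv$. Since $\bA$ is row-stochastic, $1$ is an eigenvalue with right eigenvector $\mathbf{1}$, so $1$ is also an eigenvalue of $\bA^T$. Assumption \ref{assump:strong_connectivity} makes $\bA$ irreducible and non-negative; by the Perron--Frobenius theorem the eigenvalue $1$ is simple and admits a left eigenvector with all entries strictly positive. Fix this $\bv\succ 0$ (normalized, say, so that $\sum_i v_i = 1$); then $\bv^T\bA=\bv^T$ and the martingale identity holds.

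Finally, for the almost-sure limit, observe that each component $P_{i,t}(B)=\int_B p_{i,t}(x)\,dx\in[0,1]$ since $p_{i,t}$ is a probability density, so
\begin{equation*}
0 \;\leq\; \bv^T\bP_t(B) \;\leq\; \sum_{i=1}^M v_i \;<\; \infty.
\end{equation*}
Thus $\{\bv^T\bP_t(B)\}_{t\geq 0}$ is a non-negative martingale adapted to $\{\mathcal{F}_t\}$ and uniformly bounded. Doob's martingale convergence theorem then yields existence of $\lim_{t\to\infty}\bv^T\bP_t(B)$ almost surely, completing the proof.

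The only non-routine step is the appeal to Perron--Frobenius to guarantee that the left eigenvector for eigenvalue $1$ can be chosen with strictly positive entries; without the strong-connectivity (irreducibility) part of Assumption \ref{assump:strong_connectivity} one could only guarantee a non-negative eigenvector, which would be insufficient for using $\bv^T\bP_t(B)$ as a faithful aggregate belief on $B$ in the subsequent lemmas.
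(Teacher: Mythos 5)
Your proposal is correct and follows essentially the same route as the paper's proof: integrate the recursion (\ref{eq:density_update_matrix}) over $B$, use Lemma \ref{lemma:lemmaA} to annihilate the innovation term, take $\bv$ as the strictly positive left Perron eigenvector of the irreducible stochastic matrix $\bA$ (guaranteed by Assumption \ref{assump:strong_connectivity}), and apply the martingale convergence theorem to the bounded nonnegative martingale $\bv^T\bP_t(B)$. The only cosmetic difference is that you take the conditional expectation before contracting with $\bv^T$, whereas the paper left-multiplies by $\bv^T$ first; the two orderings are equivalent.
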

\begin{proof}
	See Appendix B.
\end{proof}

The next lemma obtains an asymptotic convergence result on the CDF of each agent in the network, which will be crucial for proving the main Theorems 1 and 2.
\begin{lemma} \label{lemma:lemmaC}
	Consider Algorithm \ref{alg:alg1}. Let $b\in [0,1]$. Then, we have: 
		\begin{equation} \label{eq:mu_convergence}
			\mu_{i,t}(b) := \min\left\{ F_{i,t}(b), 1-F_{i,t}(b) \right\} \stackrel{a.s.}{\longrightarrow} 0
		\end{equation}
		as $t\to\infty$.
\end{lemma}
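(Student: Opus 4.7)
The plan is to exploit the bounded martingale $M_t := \bv^T \bP_t([0,b])$ furnished by Lemma~\ref{lemma:lemmaB}, where $\bv \succ 0$ is the positive left eigenvector of $\bA$ satisfying $\bv^T\bA=\bv^T$ (Perron--Frobenius applied to the irreducible stochastic matrix $\bA$). Since $M_t \in [0, \nn\bv\nn_1]$ is uniformly bounded, Doob's decomposition gives $\EE[\langle M\rangle_\infty] = \lim_t \EE[M_t^2] - \EE[M_0^2] \le \nn\bv\nn_1^2 < \infty$, where $\langle M\rangle_t = \sum_{s<t} \EE[(M_{s+1}-M_s)^2|\mathcal F_s]$. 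Hence $\langle M\rangle_\infty<\infty$ almost surely, forcing $\EE[(M_{t+1}-M_t)^2|\mathcal F_t]\to 0$ a.s. The rest of the argument consists of lower-bounding this conditional variance by a strictly positive multiple of $\mu_{i,t}(b)^2$.

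The algebraic heart is an exact expression for the increment. From (\ref{eq:density_update_matrix}) and $\bv^T\bA = \bv^T$, $M_{t+1} - M_t = \bv^T \int_{[0,b]} \bD_t(x)\bp_t(x)\,dx$. Since bisection yields $\mathcal Z_{i,t}(y) = 1/2$ (cf.\ Appendix A), the $i$-th component of the integrand is $a_{i,i}(2l_i(Y_{i,t+1}|x,A_{i,t})-1)p_{i,t}(x)$. Writing $\sigma_{i,t+1} := 2Y_{i,t+1}-1 \in \{\pm 1\}$, the BSC structure (\ref{eq:BSC}) gives $2l_i(Y_{i,t+1}|x,A_{i,t})-1 = \sigma_{i,t+1}(1-2\epsilon_i)[I(x\in A_{i,t}) - I(x\notin A_{i,t})]$. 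With $A_{i,t} = [0,\hat X_{i,t}]$, integrating over $[0,b]$ and splitting into the two cases $b\le\hat X_{i,t}$ and $b>\hat X_{i,t}$, the bisection identity $F_{i,t}(\hat X_{i,t}) = 1/2$ causes the signed integral to collapse to $F_{i,t}(b)$ and to $1-F_{i,t}(b)$, respectively; in either case the result equals $\mu_{i,t}(b)$. This yields the crucial identity
\begin{equation*}
M_{t+1} - M_t = \sum_{i=1}^M v_i\, a_{i,i}(1-2\epsilon_i)\,\sigma_{i,t+1}\,\mu_{i,t}(b).
\end{equation*}

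By Assumption~\ref{assump:cond_indep} the $\{\sigma_{i,t+1}\}_i$ are conditionally independent given $\mathcal F_t$, so cross terms vanish and
\begin{equation*}
\EE[(M_{t+1}-M_t)^2 | \mathcal F_t] = \sum_{i=1}^M v_i^2 a_{i,i}^2 (1-2\epsilon_i)^2 \mu_{i,t}(b)^2 \, \Var(\sigma_{i,t+1}|\mathcal F_t).
\end{equation*}
The conditional probability $\PP(Y_{i,t+1}=1|\mathcal F_t) = \EE[l_i(1|X^*,A_{i,t})|\mathcal F_t]$ lies in $[\epsilon_i, 1-\epsilon_i]$, since $l_i(1|x,A_{i,t})\in\{\epsilon_i,1-\epsilon_i\}$ for every $x$; consequently $\Var(\sigma_{i,t+1}|\mathcal F_t) \ge 4\epsilon_i(1-\epsilon_i) > 0$. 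Combined with $\EE[(M_{t+1}-M_t)^2|\mathcal F_t] \to 0$ a.s., this forces $\mu_{i,t}(b)^2 \to 0$ a.s.\ for every $i$, and taking square roots delivers the claim.

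The only delicate step is the increment identity in the second paragraph: recognising that $\min(F_{i,t}(b),1-F_{i,t}(b))$ emerges naturally from integrating the innovation term, driven entirely by the bisection rule $F_{i,t}(\hat X_{i,t})=1/2$. Once this identity is in hand, $L^2$-boundedness of the martingale and the uniform BSC variance lower bound combine mechanically to give the almost-sure conclusion; no tools beyond Assumptions~\ref{assump:cond_indep}--\ref{assump:strong_connectivity} and the preceding lemmas are required.
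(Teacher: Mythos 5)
Your proof is correct, but it takes a genuinely different analytic route from the paper. The paper works with the tilted process $\zeta_t(B)=\exp(\bv^T\bP_t(B))$: it shows this is a bounded submartingale, deduces $\zeta_{t+1}(B)/\zeta_t(B)\to 1$ a.s., passes this through a dominated-convergence argument for conditional expectations, and then computes the conditional moment generating function of each innovation term explicitly, obtaining a product of factors $\cosh\left(v_i a_{i,i}(1-2\epsilon_i)\mu_{i,t}(b)\right)$ that must converge to $1$, which forces each $\mu_{i,t}(b)\to 0$ a.s. You instead stay with the bounded martingale $M_t=\bv^T\bP_t([0,b])$ itself, bound $\EE[\langle M\rangle_\infty]$ via the Doob decomposition, conclude that the predictable quadratic variation is a.s.\ finite so the conditional second moments of the increments vanish, and then lower-bound these by positive multiples of $\mu_{i,t}(b)^2$. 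Both arguments share the same algebraic core — the case analysis $F_{i,t}(b)\le 1/2$ versus $F_{i,t}(b)>1/2$, driven by the bisection identity $F_{i,t}(\hat X_{i,t})=1/2$, which collapses the integrated innovation to $\pm a_{i,i}(1-2\epsilon_i)\mu_{i,t}(b)$; in the paper this appears inside the $\cosh$ computation, in yours as an exact increment formula. Your $L^2$/quadratic-variation route is arguably more elementary, avoiding the exponential tilting and the somewhat delicate interchange of limit and conditional expectation, while the paper's $\cosh$ computations are reused verbatim in the proof of Theorem \ref{thm:thmA}, so that route buys machinery needed later. Two small points of hygiene: your identity $\EE[(M_{t+1}-M_t)^2|\mathcal F_t]=\sum_i v_i^2a_{i,i}^2(1-2\epsilon_i)^2\mu_{i,t}(b)^2\Var(\sigma_{i,t+1}|\mathcal F_t)$ implicitly uses that the increment has zero conditional mean (true, by Lemma \ref{lemma:lemmaA}, and needed to replace the conditional variance by the conditional second moment), and your bound $\Var(\sigma_{i,t+1}|\mathcal F_t)\ge 4\epsilon_i(1-\epsilon_i)$ is more conservative than necessary, since under bisection $\PP(Y_{i,t+1}=1|\mathcal F_t)=1/2$ exactly and the conditional variance equals $1$; neither affects correctness.
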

\begin{proof}
	See Appendix C.
\end{proof}

Define the dynamic range (with respect to all agents in the network) of the posterior probability that $X^*$ lies in set $B\subset \mathcal{X}$:
\begin{equation} \label{eq:dyn_range}
	V_t(B) = \max_i \PP_{i,t}(B) - \min_i \PP_{i,t}(B)
\end{equation}
Also, define the innovation:
\begin{equation*}
	d_{i,t+1}(B) = \left[ \int_B \bD_t(x)\bp_t(x) dx \right]_i = \int_B [\bD_t(x)]_{i,i} p_{i,t}(x) dx
\end{equation*}

We next prove a lemma that shows that the dynamic range $V_t(B)$ has a useful upper bound. 
\begin{lemma} \label{lemma:lemmaD}
	Consider Algorithm \ref{alg:alg1}. Let $B=[0,b]$ with $b \leq 1$. Then, for all $r\in \NN$:
	\begin{align} 
		V_{t+r}(B) &\leq \tau_1(\bA^r) V_t(B) \nonumber \\
		&\quad + \sum_{k=0}^{r-1} \left( \max_i d_{i,t+r-k}(B) - \min_i d_{i,t+r-k}(B) \right) \label{eq:V_bound}
	\end{align}
	In addition, there exists a finite $r\in \NN$ such that $\tau_1(A^r)<1$.	
\end{lemma}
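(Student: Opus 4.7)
The plan is to unroll the vector density update (\ref{eq:density_update_matrix}) in integrated form, then apply Proposition \ref{prop:Amat} to each term in the resulting telescoping sum. Integrating (\ref{eq:density_update_matrix}) termwise over $B=[0,b]$ yields the linear recursion
\begin{equation*}
    \bP_{t+1}(B) = \bA\,\bP_t(B) + \bd_{t+1}(B),
\end{equation*}
where $\bd_{t+1}(B)\in\RR^M$ has $i$th entry $d_{i,t+1}(B)$. Iterating this recursion $r$ times and using associativity gives
\begin{equation*}
    \bP_{t+r}(B) \;=\; \bA^r\,\bP_t(B) \;+\; \sum_{k=0}^{r-1} \bA^k\,\bd_{t+r-k}(B).
\end{equation*}

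Next I would apply Proposition \ref{prop:Amat} to each summand. Since $\bA$ is row-stochastic, so is $\bA^k$ for every $k\geq 0$, so the proposition is applicable. For the leading term, $\bP_t(B)$ is entrywise non-negative, so Proposition \ref{prop:Amat} gives
\begin{equation*}
    \max_i[\bA^r\bP_t(B)]_i - \min_i[\bA^r\bP_t(B)]_i \;\leq\; \tau_1(\bA^r)\,V_t(B).
\end{equation*}
For the innovation terms $\bd_{t+r-k}(B)$ some entries may be negative, but the coefficient-of-ergodicity bound is invariant under translation of the argument by a constant vector (adding $c\mathbf{1}$ leaves both sides unchanged since $\bA^k\mathbf{1}=\mathbf{1}$), so Proposition \ref{prop:Amat} applies after shifting $\bd_{t+r-k}(B)$ to be non-negative. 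Using $\tau_1(\bA^k)\leq 1$ for all $k$, I obtain
\begin{equation*}
    \max_i[\bA^k\bd_{t+r-k}(B)]_i - \min_i[\bA^k\bd_{t+r-k}(B)]_i \;\leq\; \max_i d_{i,t+r-k}(B) - \min_i d_{i,t+r-k}(B).
\end{equation*}
Since $\max_i(u_i+w_i)-\min_i(u_i+w_i) \leq (\max_i u_i - \min_i u_i) + (\max_i w_i - \min_i w_i)$, summing these bounds over the $r$ innovation terms plus the leading term produces the claimed inequality (\ref{eq:V_bound}).

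For the second assertion I would use a standard Perron--Frobenius argument. Assumption \ref{assump:strong_connectivity} gives that $\bA$ is irreducible and has strictly positive diagonal entries. Positive self-loops at every state imply aperiodicity (the period divides $1$), so $\bA$ is primitive, and hence there exists $r\in\NN$ with $\bA^r$ strictly positive entrywise. Rewriting the coefficient of ergodicity for stochastic matrices as
\begin{equation*}
    \tau_1(\bA^r) \;=\; 1 - \min_{i\neq j}\sum_{l=1}^M \min\bigl([\bA^r]_{i,l},[\bA^r]_{j,l}\bigr),
\end{equation*}
the positivity of every entry of $\bA^r$ makes the minimum on the right strictly positive, so $\tau_1(\bA^r)<1$.

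The main obstacle I anticipate is the handling of the innovation vectors, which are not sign-constrained and thus fall outside the literal hypotheses of Proposition \ref{prop:Amat}; the fix is the short translation-invariance observation above, but one has to be careful that the bound is applied to the right quantity (the dynamic range of $\bd_{t+r-k}(B)$), since only the range is translation-invariant, not the entries themselves. The remainder of the argument is bookkeeping (unrolling the recursion, combining dynamic-range inequalities) and a direct appeal to primitivity for the last claim.
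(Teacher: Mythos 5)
Your proposal is correct and follows essentially the same route as the paper: integrate the update to get $\bP_{t+1}(B)=\bA\bP_t(B)+\bd_{t+1}(B)$, unroll over $r$ steps, apply the contraction property of $\tau_1$ together with subadditivity of the dynamic range, and invoke irreducibility (with positive self-reliances) to get $\tau_1(\bA^r)<1$. Your two refinements---the translation-invariance remark that extends Proposition \ref{prop:Amat} to the sign-unconstrained innovation vectors, and the explicit primitivity argument via the Dobrushin formula---are sound and merely make explicit what the paper leaves implicit.
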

\begin{proof}
	See Appendix D.
\end{proof}

To show convergence of the integrated beliefs of all agents in the network to a common limiting belief, it suffices to show $V_t(B) \stackrel{i.p.}{\to} 0$. While this method of proof does not allow identification of the limiting belief, it shows a global equilibrium exists and yields insight into the rate of convergence through the ergodicity properties of $\bA$. The structure of the limiting belief is given in Theorem \ref{thm:thmB}. Theorem \ref{thm:thmA} shows convergence of asymptotic beliefs to a common limit.
\begin{theorem} \label{thm:thmA} 
	Consider Algorithm \ref{alg:alg1} and let the assumptions in Sec. V.A hold. Let $B=[0,b]$, $b\leq 1$. Then, consensus of the agents' beliefs is asymptotically achieved across the network:
	\begin{equation*}
		V_t(B) = \max_{i} \PP_{i,t}(B) - \min_i \PP_{i,t}(B) \stackrel{i.p.}{\longrightarrow} 0
	\end{equation*}
	as $t\to\infty$.
\end{theorem}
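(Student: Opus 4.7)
The plan is to combine the recursive contraction of Lemma~\ref{lemma:lemmaD} with the almost-sure decay of $\mu_{i,t}(b)$ from Lemma~\ref{lemma:lemmaC}. In fact I would establish the stronger statement that $V_t(B)\to 0$ almost surely, which immediately implies convergence in probability. The two ingredients are (i) a pointwise bound on each innovation $d_{i,t+1}(B)$ in terms of $\mu_{i,t}(b)$, and (ii) an iteration of the contraction factor $\rho:=\tau_1(\bA^r)<1$ along the subsequence $t=rn$.

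For (i) I would exploit the bisection query point $\hat X_{i,t}$ together with the normalizer identity $\mathcal{Z}_{i,t}(y)=1/2$ already noted after Algorithm~\ref{alg:alg1}. Under this, the diagonal entry $[\bD_t(x)]_{i,i}$ equals $a_{i,i}(2f_1^{(i)}(y_{i,t+1})-1)$ on $A_{i,t}=[0,\hat X_{i,t}]$ and $a_{i,i}(2f_0^{(i)}(y_{i,t+1})-1)$ on its complement. Splitting $d_{i,t+1}(B)=\int_B [\bD_t(x)]_{i,i}p_{i,t}(x)\,dx$ according to whether $b\le \hat X_{i,t}$ or $b>\hat X_{i,t}$, and using $f_0^{(i)}+f_1^{(i)}=1$ together with $F_{i,t}(\hat X_{i,t})=1/2$, a short computation collapses either case to
\begin{equation*}
    |d_{i,t+1}(B)| \;\le\; a_{i,i}(1-2\epsilon_i)\,\mu_{i,t}(b),
\end{equation*}
since the surviving factor is $F_{i,t}(b)$ in the first regime and $1-F_{i,t}(b)$ in the second, each equal to $\mu_{i,t}(b)$ there. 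Lemma~\ref{lemma:lemmaC} then forces $d_{i,t+1}(B)\to 0$ a.s., and finiteness of $M$ gives $\Delta_t:=\max_i d_{i,t}(B)-\min_i d_{i,t}(B)\to 0$ a.s.

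For (ii), I fix $r$ with $\rho=\tau_1(\bA^r)<1$ (guaranteed by Lemma~\ref{lemma:lemmaD}) and set $\eta_s:=\sum_{k=0}^{r-1}\Delta_{rs+r-k}$. Lemma~\ref{lemma:lemmaD} gives $V_{r(s+1)}(B)\le \rho V_{rs}(B)+\eta_s$, which iterates to
\begin{equation*}
    V_{rn}(B) \;\le\; \rho^{n} V_0(B) + \sum_{s=0}^{n-1}\rho^{n-1-s}\eta_s.
\end{equation*}
Since $\eta_s\to 0$ a.s., splitting the convolution sum at any threshold beyond which $|\eta_s|<\varepsilon$ shows the right-hand side tends to zero a.s.; note $V_0(B)\le 1$ so the initial term is under control. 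For a generic time $t=rn+j$ with $0\le j<r$, one more application of Lemma~\ref{lemma:lemmaD} with exponent $j$ bounds $V_t(B)$ by $V_{rn}(B)$ plus finitely many vanishing innovation terms, yielding $V_t(B)\to 0$ a.s.

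The main obstacle is the bookkeeping in step (i): the closed-form simplification that uses $\mathcal{Z}_{i,t}=1/2$ and the symmetry $f_0^{(i)}+f_1^{(i)}=1$ to produce a bound of the form $|d_{i,t+1}(B)|\lesssim \mu_{i,t}(b)$ rather than a mere uniform bound. This is what makes the innovation fade and allows the geometric contraction to drive $V_t(B)$ to zero; once this bound is in place, the rest is a routine pathwise iteration.
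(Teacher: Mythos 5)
Your proposal is correct, and it takes a genuinely different route from the paper. You replace the paper's expectation-based argument with a pathwise one: the key observation that, since $\mathcal{Z}_{i,t}=1/2$ and $f_0^{(i)}+f_1^{(i)}=1$, the innovation satisfies $d_{i,t+1}(B)=a_{i,i}\bigl(2f_1^{(i)}(y_{i,t+1})-1\bigr)\bigl(\PP_{i,t}(B\cap A_{i,t})-\PP_{i,t}(B\cap A_{i,t}^c)\bigr)$, whose absolute value equals $a_{i,i}(1-2\epsilon_i)\mu_{i,t}(b)$ exactly in both regimes $b\leq \hat X_{i,t}$ and $b>\hat X_{i,t}$; this is a deterministic bound, valid sample path by sample path, so Lemma \ref{lemma:lemmaC} immediately kills the innovation range a.s., and iterating Lemma \ref{lemma:lemmaD} along $t=rn$ with $\rho=\tau_1(\bA^r)<1$ (plus the trivial $\tau_1(\bA^j)\le 1$ step for remainders) gives $V_t(B)\to 0$ almost surely, hence in probability. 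The paper instead bounds the \emph{conditional expectation} of $\max_i d_{i,t+1}(B)-\min_i d_{i,t+1}(B)$, which requires the log-sum-exp smooth-max device of Proposition \ref{prop:lse}, Jensen's inequality, and the hyperbolic-cosine moment-generating-function computation recycled from the proof of Lemma \ref{lemma:lemmaC}; it then iterates the recursion in expectation, invokes dominated convergence for $\EE[\delta_t]\to 0$, and finishes with Markov's inequality, obtaining only convergence in probability (and treating $r=1$ as the generic case, with $r>1$ ``similar''). Your version is more elementary (no Proposition \ref{prop:lse}, no MGF computation), handles general $r$ explicitly, and delivers a strictly stronger conclusion (a.s. convergence, which with $V_t(B)\le 1$ also gives the paper's $L^1$ statement); what the paper's route buys is the explicit recursion $\EE[V_{t+1}(B)]\le\tau_1(\bA)\EE[V_t(B)]+\EE[\delta_t]$, which is the form they point to for rate-of-convergence intuition via the ergodicity coefficient. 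The only point to state carefully in a write-up is the case analysis behind your bound, namely that $F_{i,t}(b)\le 1/2$ iff $b\le\hat X_{i,t}$ (up to the measure-zero boundary case, where the identity still holds since both sides equal $a_{i,i}(1-2\epsilon_i)/2$), exactly as in the paper's proof of Lemma \ref{lemma:lemmaC}.
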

\begin{proof}
	See Appendix E.
\end{proof}

Theorem \ref{thm:thmA} establishes that Algorithm 1 produces belief functions that become identical over all agents. This establishes asymptotic consensus among the beliefs, i.e., that as time goes on all agents come to agreement about the uncertainty in the target state. It remains to show the limiting belief is in fact concentrated at the true target state $X^*$, as stated in Thm. \ref{thm:thmB}.
\begin{lemma} \label{lemma:lemmaE}
	Consider Algorithm \ref{alg:alg1}. Let $\bv$ be the left eigenvector of $\bA$ corresponding to the unit eigenvalue. Assume that for all agents $i$, $p_{i,0}(X^*)>0$. Then, the posteriors evaluated at the true target state $X^*$ have the following asymptotic behavior:
	\begin{equation*}
		\liminf_{t\to\infty} \frac{1}{t} \sum_{i=1}^M v_i \log(p_{i,t}(X^*)) \geq \sum_{i=1}^M v_i a_{i,i} C(\epsilon_i) = K(\bepsilon) \quad (a.s.)
	\end{equation*}
	where $C(\epsilon)$ is the capacity of the BSC.
\end{lemma}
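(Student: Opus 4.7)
The plan is to linearize $\log p_{i,t+1}(X^*)$ via Jensen's inequality applied to \eqref{eq:density_update}, iterate the resulting linear lower bound by pairing with the left Perron eigenvector of $\bA$, and identify the a.s.\ limit of the averaged innovation using the symmetry of the BSC together with the strong law of large numbers (SLLN).

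First I would use $\mathcal{Z}_{i,t}(y)\equiv 1/2$ (already verified in the proof of Lemma \ref{lemma:lemmaA}) to rewrite \eqref{eq:density_update} at $x=X^*$ as
$$p_{i,t+1}(X^*) \;=\; a_{i,i}\bigl(2\,l_i(y_{i,t+1}|X^*,A_{i,t})\bigr)\,p_{i,t}(X^*) \;+\; \sum_{j\in\mathcal{N}_i} a_{i,j}\,p_{j,t}(X^*).$$
A one-line induction using $a_{i,i}>0$ (Assumption \ref{assump:strong_connectivity}) together with $p_{i,0}(X^*)>0$ ensures $p_{i,t}(X^*)>0$ for all $t,i$, so $\log p_{i,t}(X^*)$ is well-defined. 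Since $\{a_{i,j}\}_{j=1}^M$ is a convex combination and every argument of the logarithm is positive, concavity of $\log$ yields the componentwise linear bound
$$\mathbf{q}_{t+1} \;\geq\; \bA\,\mathbf{q}_t + \mathbf{r}_t, \qquad q_{i,t}:=\log p_{i,t}(X^*), \;\; r_{i,t}:=a_{i,i}\log\bigl(2\,l_i(y_{i,t+1}|X^*,A_{i,t})\bigr).$$

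Next, by Assumption \ref{assump:strong_connectivity}, $\bA$ is irreducible and row-stochastic, so Perron--Frobenius supplies a strictly positive left eigenvector $\bv\succ 0$ with $\bv^T\bA=\bv^T$. Positivity of $\bv$ preserves the inequality under left-multiplication, giving $\bv^T\mathbf{q}_{t+1}\geq \bv^T\mathbf{q}_t+\bv^T\mathbf{r}_t$. Unrolling this recursion and dividing by $t$ yields
$$\frac{1}{t}\,\bv^T\mathbf{q}_t \;\geq\; \frac{1}{t}\,\bv^T\mathbf{q}_0 \;+\; \frac{1}{t}\sum_{s=0}^{t-1}\bv^T\mathbf{r}_s,$$
with the first term tending to $0$ (finite $\bv^T\mathbf{q}_0$) or else $+\infty$ (in which case the claim is immediate).

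Finally I would identify the a.s.\ limit of the averaged innovation. By Assumptions \ref{assump:cond_indep}--\ref{assump:BSC}, conditionally on $\mathcal{F}_s$ and $X^*$ the value $\log(2\,l_i(y_{i,s+1}|X^*,A_{i,s}))$ equals $\log(2(1-\epsilon_i))$ with probability $1-\epsilon_i$ and $\log(2\epsilon_i)$ with probability $\epsilon_i$, \emph{independently of whether} $X^*\in A_{i,s}$. Hence, conditionally on $X^*$, these random variables are i.i.d.\ in $s$ and uniformly bounded, with common mean
$$(1-\epsilon_i)\log(2(1-\epsilon_i)) + \epsilon_i\log(2\epsilon_i) \;=\; \log 2 - H_b(\epsilon_i) \;=\; C(\epsilon_i),$$
where $H_b(\cdot)$ denotes the binary entropy. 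The SLLN then gives $\tfrac{1}{t}\sum_{s=0}^{t-1}\bv^T\mathbf{r}_s \to \sum_i v_i a_{i,i}C(\epsilon_i)=K(\bepsilon)$ a.s., and combining with the previous display proves the claim. The main subtlety---and essentially the only non-routine step---is this BSC-symmetry observation: since $A_{i,s}$ is $\mathcal{F}_s$-measurable and time-varying, it is \emph{not} a priori obvious that $\log(2\,l_i(y_{i,s+1}|X^*,A_{i,s}))$ has a stable distribution, and it is precisely the symmetry of the channel (correct with probability $1-\epsilon_i$ whether $Z_{i,s}=0$ or $1$) that preserves the i.i.d.\ property and removes any need for a martingale-SLLN refinement.
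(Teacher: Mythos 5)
Your proposal is correct and follows essentially the same route as the paper's proof: apply Jensen's inequality to the logarithm of the update at $x=X^*$, left-multiply by the positive left Perron eigenvector $\bv$, unroll the recursion, and invoke the strong law of large numbers to identify the per-agent limit $(1-\epsilon_i)\log(2(1-\epsilon_i))+\epsilon_i\log(2\epsilon_i)=C(\epsilon_i)$. Your explicit observation that BSC symmetry makes $\log\bigl(2\,l_i(y_{i,s+1}|X^*,A_{i,s})\bigr)$ identically distributed regardless of the $\mathcal{F}_s$-measurable query region is a nice clarification of a step the paper leaves implicit when it applies the LLN.
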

\begin{proof}
	See Appendix F.
\end{proof}

Now, we are ready to prove the main consistency result of the asymptotic beliefs. The proof is based on the consensus result of Theorem \ref{thm:thmA}.
\begin{theorem} \label{thm:thmB} 
	Consider Algorithm \ref{alg:alg1} and let the assumptions in Sec. V.A hold. Let $b\in [0,1]$. Then, we have for each $i\in \mathcal{N}$:
		\begin{equation*}
			F_{i,t}(b) \stackrel{i.p.}{\longrightarrow} F_{\infty}(b) =\left\{ \begin{array}{ll} 0, & b<X^* \\ 1, & b>X^* \end{array} \right.
		\end{equation*}
		as $t\to\infty$. In addition, for all $i\in \mathcal{N}$:
		\begin{equation*}
			\check{X}_{i,t}:=\int_{x=0}^1 x p_{i,t}(x) dx \stackrel{i.p.}{\longrightarrow} X^*
		\end{equation*}
\end{theorem}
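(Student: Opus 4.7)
The plan is to combine Lemma~\ref{lemma:lemmaC}, Theorem~\ref{thm:thmA}, and Lemma~\ref{lemma:lemmaE} to reduce the limiting CDF to a common $\{0,1\}$-valued step function with a (possibly random) threshold $\theta$, and then to rule out $\theta\neq X^*$ via the density-growth estimate of Lemma~\ref{lemma:lemmaE}. The mean statement will then follow from CDF convergence by integration by parts and bounded convergence.

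First, fix $b\in[0,1]$ with $b\neq X^*$. Lemma~\ref{lemma:lemmaC} gives $\mu_{i,t}(b)\to 0$ almost surely, so $F_{i,t}(b)$ is close to $\{0,1\}$ in probability for every agent $i$. Theorem~\ref{thm:thmA} gives $V_t([0,b])\to 0$ in probability. Combining these via the subsequence characterization of convergence in probability (pass to a subsequence along which both occur almost surely, and deduce that all $F_{i,t}(b)$ are eventually within any prescribed distance of a single value in $\{0,1\}$), I would obtain $F_{i,t}(b)\stackrel{i.p.}{\longrightarrow}\chi(b)\in\{0,1\}$ with a common limit across agents. Monotonicity of $b\mapsto F_{i,t}(b)$ then forces $\chi(b)=I(b>\theta)$ for some random threshold $\theta\in[0,1]$.

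To identify $\theta=X^*$, suppose toward contradiction that $\PP(\theta<X^*)>0$. Pick $b\in(\theta,X^*)$ and $\delta\in(0,X^*-b)$ small enough that $[X^*-\delta,X^*+\delta]\subset(b,1]$. On the event $\{\theta<X^*\}$, along some subsequence, $\PP_{i,t}([X^*-\delta,X^*+\delta])\leq \PP_{i,t}((b,1])=1-F_{i,t}(b)\to 0$. On the other hand, Lemma~\ref{lemma:lemmaE} yields $\liminf_{t\to\infty}\frac{1}{t}\sum_i v_i\log p_{i,t}(X^*)\geq K(\bepsilon)>0$ almost surely, so for at least one agent the density at $X^*$ blows up exponentially in $t$. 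Using the density update~(\ref{eq:density_update_matrix})---whose multiplicative factor is piecewise constant in $\{2\epsilon_i,2(1-\epsilon_i)\}$ with discontinuities only at past bisection query points---together with the clustering of those query points at $X^*$ under the bisection rule $\PP_{i,t}(A_{i,t})=1/2$, I would convert the pointwise growth of $p_{i,t}(X^*)$ into a non-vanishing lower bound on $\PP_{i,t}([X^*-\delta,X^*+\delta])$, contradicting the vanishing. The case $\PP(\theta>X^*)>0$ is symmetric, so $\theta=X^*$ almost surely and $F_{i,t}(b)\stackrel{i.p.}{\longrightarrow}I(b>X^*)$ for each $b\neq X^*$.

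For the mean, integration by parts gives $\check{X}_{i,t}=1-\int_0^1 F_{i,t}(x)\,dx$. Since $|F_{i,t}|\leq 1$, convergence in probability for each $x\neq X^*$ upgrades to $L^1$-convergence by bounded convergence; Fubini and dominated convergence on $[0,1]$ then give $\int_0^1 F_{i,t}(x)\,dx\to 1-X^*$ in $L^1$ and hence in probability, whence $\check{X}_{i,t}\stackrel{i.p.}{\longrightarrow}X^*$. The main obstacle will be the density-to-mass step: pointwise density blowup at $X^*$ does not automatically imply posterior mass concentration in a neighborhood of $X^*$ (a density can be arbitrarily spiky), so the contradiction must exploit the specific multiplicative structure of~(\ref{eq:density_update_matrix}) and the geometry of accumulating bisection query points near $X^*$ to close the gap between Lemma~\ref{lemma:lemmaE} and the mass-vanishing hypothesis.
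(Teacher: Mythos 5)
Your route is essentially the paper's: consensus across agents from Theorem \ref{thm:thmA}, exponential growth of the posterior density at $X^*$ from Lemma \ref{lemma:lemmaE} (hence $p_{i_0,t}(X^*)\to\infty$ for some agent $i_0$), identification of the common limiting CDF as the unit step at $X^*$, and then the mean via $\check{X}_{i,t}=1-\int_0^1 F_{i,t}(u)\,du$ with a bounded/dominated-convergence argument (the paper uses the reverse Fatou lemma; your $L^1$/Fubini variant is equivalent). One repair is needed in your first step: Lemma \ref{lemma:lemmaC} together with Theorem \ref{thm:thmA} alone do not yield convergence of $F_{i,t}(b)$ — they permit all agents to oscillate in unison between values near $0$ and values near $1$, so ``eventually within any prescribed distance of a single value in $\{0,1\}$'' does not define a limit $\chi(b)$. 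You need Lemma \ref{lemma:lemmaB} (almost-sure convergence of $\bv^T\bP_t(B)$) to guarantee a genuine common limiting random variable; this is how the paper obtains $\PP_{i,t}(B)\stackrel{i.p.}{\longrightarrow}\PP_\infty(B)$ before identifying the limit.

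The step you explicitly leave open — converting the blow-up of $p_{i_0,t}(X^*)$ into a non-vanishing lower bound on $\PP_{i_0,t}([X^*-\delta,X^*+\delta])$, i.e., ruling out a threshold $\theta\neq X^*$ — is a genuine gap in your submission: as you yourself observe, pointwise divergence of a density does not by itself contradict vanishing neighborhood mass, and your plan to exploit the piecewise-constant multiplicative update and the clustering of bisection points is only sketched, not executed. For comparison, the paper does not carry out that finer analysis either: it identifies the limit in one sentence, asserting that $\mu_{i_0,t}(b')\to 0$ for every $b'\in[0,1]$ (Lemma \ref{lemma:lemmaC}), monotonicity of $b'\mapsto F_{i_0,t}(b')$, and $p_{i_0,t}(X^*)\stackrel{a.s.}{\longrightarrow}\infty$ together force $F_{i_0,t}(b)\to I(b>X^*)$, and then transfers this to every agent through the consensus of Theorem \ref{thm:thmA}; no contradiction argument or random threshold is introduced. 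So you have correctly located where the real work of Theorem \ref{thm:thmB} lies, and your framing is compatible with the published argument, but as written your proposal does not close that identification step, while the remaining parts (in particular the consistency of $\check{X}_{i,t}$) are complete and match the paper.
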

\begin{proof}
	See Appendix G.
\end{proof}

Finally, we have a corollary that generalizes Theorem \ref{thm:thmB} to sets $B$ that are finite unions of intervals.
\begin{corollary} \label{cor:corB}
	Consider Algorithm \ref{alg:alg1}. Let $B=\cup_{k=1}^K I_k \in \mathcal{B}([0,1])$ be a finite union of disjoint intervals $I_k=[a_k,b_k)$, where $0\leq a_k < b_k \leq 1$. Then, for each $i\in \mathcal{N}$:
	\begin{equation*}
			\PP_{i,t}(B) \stackrel{i.p.}{\longrightarrow} \left\{ \begin{array}{ll} 0, & X^*\notin B \\ 1, & X^*\in B \end{array} \right.
		\end{equation*}
		as $t\to\infty$.
\end{corollary}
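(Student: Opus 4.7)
The plan is to reduce the Borel-set statement to Theorem \ref{thm:thmB} applied at the $2K$ endpoints of the intervals. Since $p_{i,t}$ is a density (so $\PP_{i,t}$ has no atoms), and since the intervals $I_k=[a_k,b_k)$ are disjoint, I would first write
\[
	\PP_{i,t}(B) \;=\; \sum_{k=1}^{K} \PP_{i,t}(I_k) \;=\; \sum_{k=1}^{K} \bigl( F_{i,t}(b_k) - F_{i,t}(a_k) \bigr).
\]
By Theorem \ref{thm:thmB}, for every endpoint $c \in \{a_k, b_k\}_{k=1}^K$ with $c \neq X^*$, we have $F_{i,t}(c) \stackrel{i.p.}{\to} F_{\infty}(c)$, where $F_{\infty}(c)=0$ for $c<X^*$ and $F_{\infty}(c)=1$ for $c>X^*$.

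Assuming first that $X^*$ is not an endpoint of any $I_k$, convergence in probability is preserved under finite sums (continuous mapping), so
\[
	\PP_{i,t}(B) \stackrel{i.p.}{\longrightarrow} \sum_{k=1}^{K} \bigl( F_{\infty}(b_k) - F_{\infty}(a_k) \bigr) \;=\; \sum_{k=1}^K I(a_k \leq X^* < b_k) \;=\; I(X^* \in B),
\]
where the penultimate equality uses that $F_{\infty}(b_k)-F_{\infty}(a_k)=1$ if $X^*\in[a_k,b_k)$ and $0$ otherwise, and the last equality uses disjointness so that at most one summand is nonzero. This handles both cases of the corollary simultaneously.

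The main obstacle is the degenerate situation where $X^*$ coincides with some endpoint $a_{k_0}$ or $b_{k_0}$, because Theorem \ref{thm:thmB} gives no information about $F_{i,t}(X^*)$ (the jump point of the limiting CDF). I would handle this by a sandwich argument: pick $\epsilon>0$ and use that
\[
	F_{i,t}(X^*-\epsilon) \leq F_{i,t}(X^*) \leq F_{i,t}(X^*+\epsilon),
\]
together with $F_{i,t}(X^*-\epsilon)\stackrel{i.p.}{\to}0$ and $F_{i,t}(X^*+\epsilon)\stackrel{i.p.}{\to}1$. Combined with Theorem \ref{thm:thmB} applied to the non-$X^*$ endpoints of each $I_k$, and then taking $\epsilon \downarrow 0$, the same limiting identity $\PP_{i,t}(B)\stackrel{i.p.}{\to}I(X^*\in B)$ is recovered for the (at most two) intervals whose boundary touches $X^*$. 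Alternatively one can absorb this case by observing that under a continuous prior on $X^*$ the set of problematic $X^*$ has prior probability zero, so for $\PP$-almost every $X^*$ the generic argument of the previous paragraph already applies.
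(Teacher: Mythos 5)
Your main argument is exactly the paper's (one-line) proof: decompose $\PP_{i,t}(B)=\sum_k \bigl(F_{i,t}(b_k)-F_{i,t}(a_k)\bigr)$ over the disjoint intervals and apply Theorem \ref{thm:thmB} at each endpoint, using that convergence in probability passes through finite sums. The extra care you take with the degenerate case where $X^*$ equals an endpoint is a real issue that the paper silently ignores (Theorem \ref{thm:thmB} says nothing about $F_{i,t}(X^*)$), but your first fix does not work: sandwiching $F_{i,t}(X^*)$ between $F_{i,t}(X^*-\epsilon)\stackrel{i.p.}{\to}0$ and $F_{i,t}(X^*+\epsilon)\stackrel{i.p.}{\to}1$ pins the quantity between limits $0$ and $1$ and hence gives no information as $\epsilon\downarrow 0$; indeed the belief mass may concentrate at $X^*$ with a nondegenerate split across the point, so for an interval such as $[X^*,b_k)$ the limit of $\PP_{i,t}(I_k)$ need not be $1$, and the corollary as stated is only guaranteed off this exceptional event. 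Your alternative observation is the right one within the paper's model: since $X^*$ is drawn from a density (the prior $p_{i,0}$ is positive everywhere) and the endpoints $\{a_k,b_k\}$ are fixed, the event that $X^*$ coincides with an endpoint has probability zero, so the generic argument suffices $\PP$-almost surely, which is all the in-probability statement requires.
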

\begin{proof}
	The proof follows by noting that $\PP_{i,t}(B) = \PP_{i,t}(\cup_k I_k) = \sum_k \PP_{i,t}(I_k)=\sum_k F_{i,t}(b_k)-F_{i,t}(a_k)$ and using Theorem \ref{thm:thmB}.
\end{proof}

\section{Experimental Validation} \label{sec:simulations}

This section presents simulations that validate the theory in Section \ref{sec:convergence} and demonstrate the benefits of the proposed decentralized 20 questions method. As expected from Theorems 1 and 2, the estimation error converges to zero for all agents in the network as the number of algorithm iterations grows, implying  correct learning of the true target state across the network.

The instantaneous squared residual error for agent $i$ was calculated using $\text{SE}_{i,t}=(\hat{X}_{i,t}-X^*)^2$ for the $t$th Monte Carlo trial. There are a total of $T$ experimental trials. The min, max and average RMSE metrics were calculated as:
\begin{align}
	\text{RMSE}_{\text{min}} &= \sqrt{ \frac{1}{T} \sum_{t=1}^T \min_i \text{SE}_{i,t} }                   \label{eq:minRMSE} \\
	\text{RMSE}_{\text{max}} &= \sqrt{ \frac{1}{T} \sum_{t=1}^T \max_i \text{SE}_{i,t} }                   \label{eq:maxRMSE} \\
	\text{RMSE}_{\text{avg}} &= \sqrt{ \frac{1}{T} \sum_{t=1}^T \frac{1}{M} \sum_{i=1}^M \text{SE}_{i,t} } \label{eq:avgRMSE}
\end{align}
The min and max metrics, $\text{RMSE}_{\text{min}}$ and $\text{RMSE}_{\text{max}}$, represent the worst and best performance over all the agents and the average over all agents is denoted $\text{RMSE}_{\text{avg}}$.

For comparison we implement the centralized fully Bayesian estimation algorithm, which requires full knowledge of the error probabilities of all agents. We make use of the basic equivalence principle derived in \cite{Tsiligkaridis:2013}, and implement the centralized method using a series of bisections (one per agent). The equivalence principle shows that this sequential bisection algorithm achieves the same performance as the jointly optimal algorithm on average.

We consider the performance of Algorithm \ref{alg:alg1} for random graphs based on the Erd\"{o}s-R\'{e}nyi construction. Let $p\in (0,1)$ denote the probability of two nodes being joined with an edge, and $M$ denote the number of nodes in the graph. The set $\mathcal{G}(M,p)$ denotes the class of all undirected random graphs based on the Erd\"{o}s-R\'{e}nyi construction, and $\mathcal{G}^{IR}(M,p) \subset \mathcal{G}(M,p)$ denotes the subclass of all irreducible graphs. We consider the ensemble-average RMSE's, $\EE[\text{RMSE}_{\text{min}}],\EE[\text{RMSE}_{\text{avg}}],\EE[\text{RMSE}_{\text{max}}]$, and approximate them by random sampling from $\mathcal{G}^{IR}(M,p)$. An example graph $G \in \mathcal{G}^{IR}(100,0.05)$ is shown in Fig. \ref{fig:graph_ER_p0_05}.
\begin{figure}[ht]
	\centering
		\includegraphics[width=0.25\textwidth]{./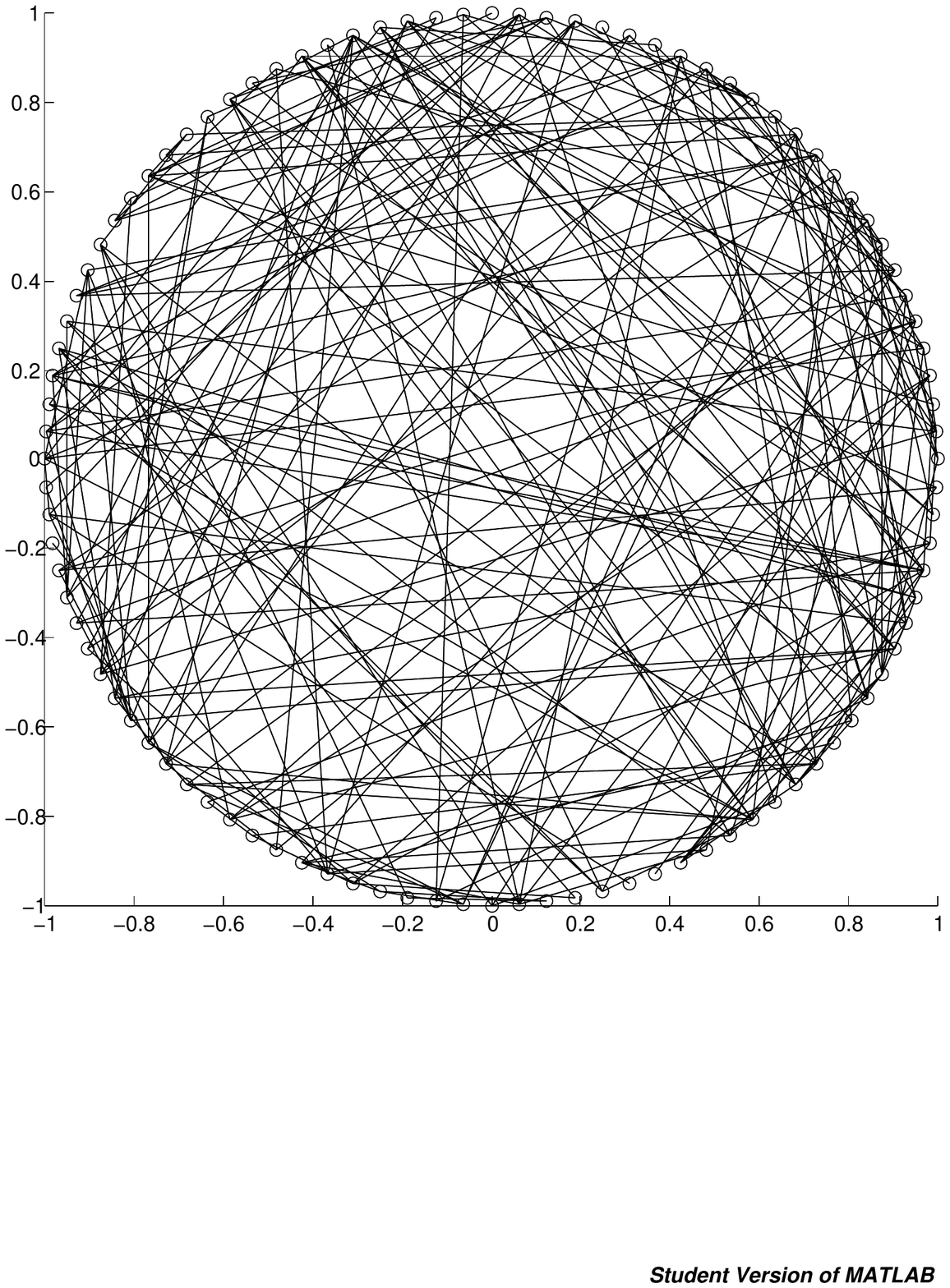}
	\caption{A realization, $G$, from the ensemble of Erd\"{o}s-R\'{e}nyi irreducible graphs $\mathcal{G}^{IR}(100,0.05)$.}
	\label{fig:graph_ER_p0_05}
\end{figure}
Figures \ref{fig:average_RMSE_p0_05}, \ref{fig:max_RMSE_p0_05} and \ref{fig:min_RMSE_p0_05} show the average, worst-case and best-case RMSE for a graph of $M=100$ agents with a connectivity probability of $p=0.05$. For this experiment the number of reliable agents $M_{\text{rel}}$, i.e., with error probability $\epsilon_i=0.05$, is either $0$ or $1$, and the unreliable agents, i.e., with error probability $\epsilon_i=0.45$, are the $M-M_{\text{rel}}$ remaining ones. The self-reliance parameter of the reliable agent was set to $0.95$ and the self-reliance of the unreliable agents was set to $0.6$. The rest of the parameters were made equal such that each row of $\bA$ sums to unity. The error performance is averaged over $500$ Monte Carlo runs.


In terms of average and worst-case RMSE performance of the network, Figures \ref{fig:average_RMSE_p0_05} and \ref{fig:max_RMSE_p0_05} show that the decentralized estimation algorithm with information sharing uniformly outperforms the algorithm without information sharing over all iterations. Of course, as Fig. \ref{fig:min_RMSE_p0_05} shows, this naturally occurs at a penalty for the best-case RMSE performance, which is attained by discounting all of the agents except for the single reliable one. 
The phenomenon of local information aggregation leads to sophisticated global behavior; convergence of all agents' estimates towards the correct target.

Furthermore, we observe from Figures \ref{fig:average_RMSE_p0_05}-\ref{fig:min_RMSE_p0_05} that if reliable agents are introduced into the network, the average RMSE and worst-case RMSE are both dramatically reduced and the best-case RMSE penalty is significantly diminished. Even one reliable agent injected in a large set of unreliable agents greatly enhances the network-wide estimation performance if information sharing is implemented. This is entirely due to the decentralized nature of the algorithm; the good information of the reliable agent is spread around the network, affecting in a positive manner all the agents in the network.
\begin{figure}[ht]
	\centering
		\includegraphics[width=0.50\textwidth]{./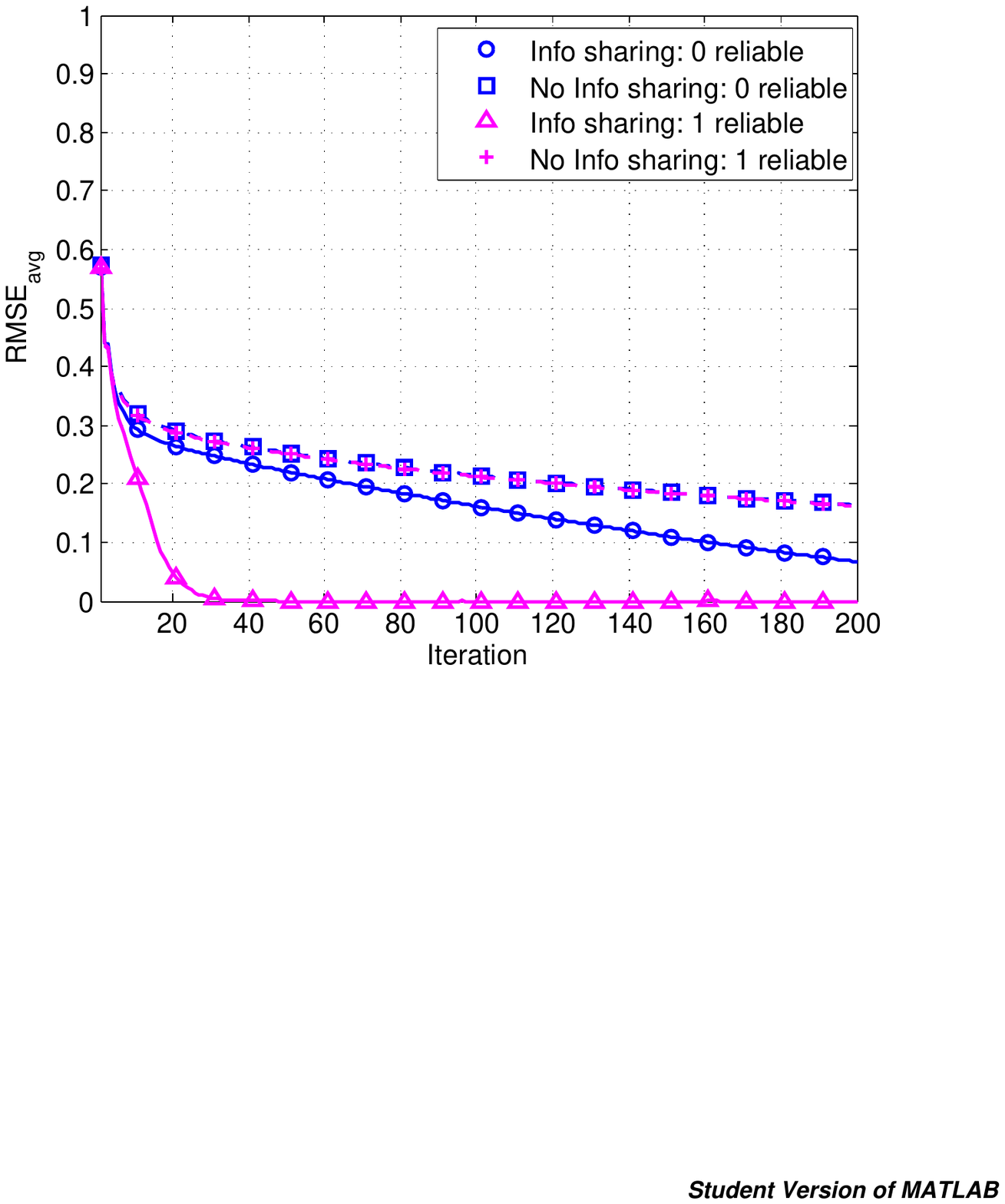}
	\caption{ Average RMSE as a function of algorithm iteration. With information sharing between a single reliable agent ($\epsilon=0.05$) and the unreliable agents ($\epsilon=0.45$) the proposed decentralized 20 questions algorithm attains a much lower RMSE (solid magenta line) than without information sharing.  The RMSE was computed as the average network RMSE (\ref{eq:avgRMSE}), and is further averaged over 10 realizations of irreducible Erd\"{o}s-R\'{e}nyi graphs drawn from $\mathcal{G}^{IR}(100,0.05)$. }
		\label{fig:average_RMSE_p0_05}
\hfill
	\centering
	\center{
		\includegraphics[width=0.50\textwidth]{./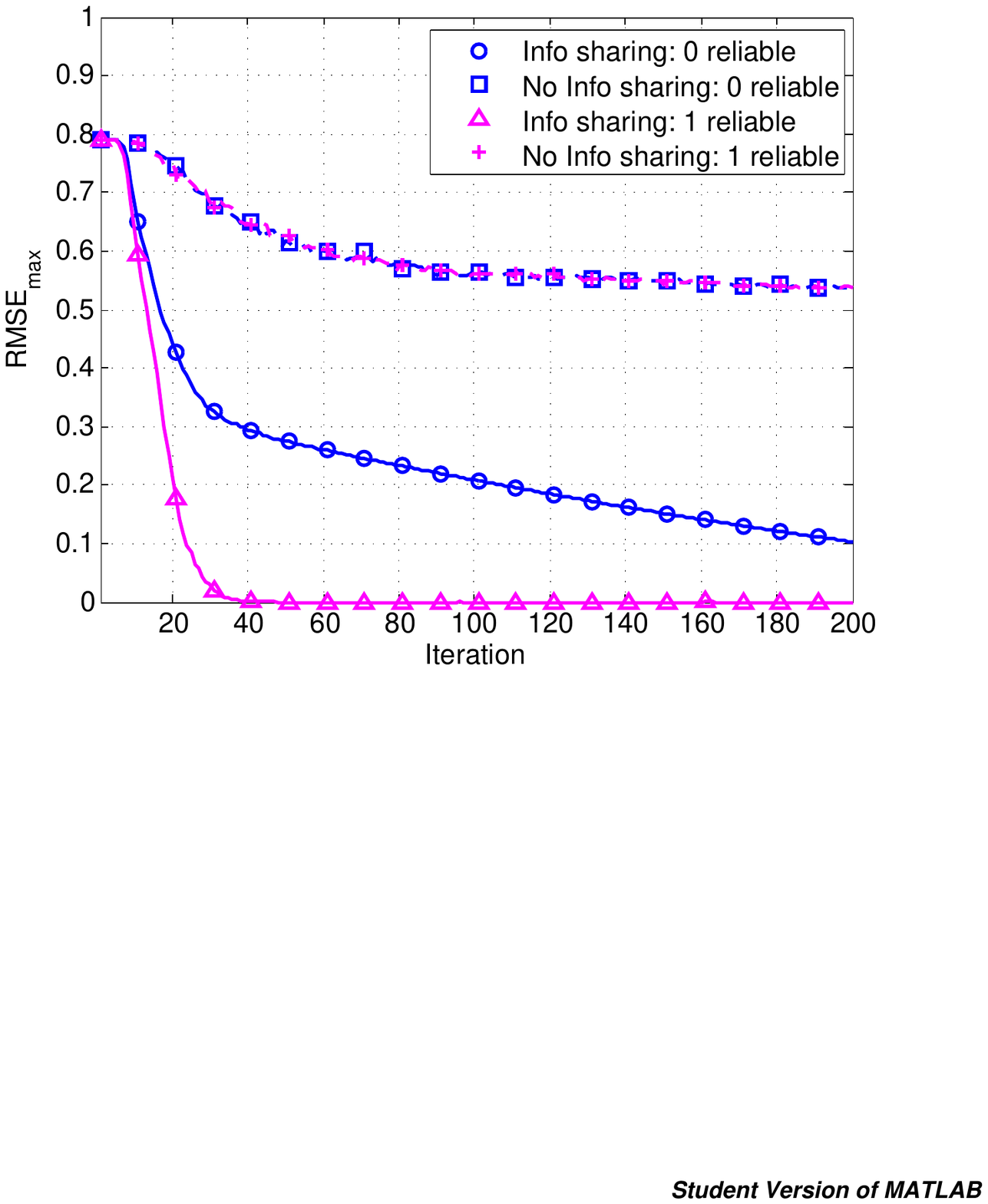}
	\caption{ Same as in Fig. \ref{fig:average_RMSE_p0_05} except we show the worst-case RMSE, computed using (\ref{eq:maxRMSE}). }
		\label{fig:max_RMSE_p0_05}
		}
\hfill
	\centering
	\center{
		\includegraphics[width=0.50\textwidth]{./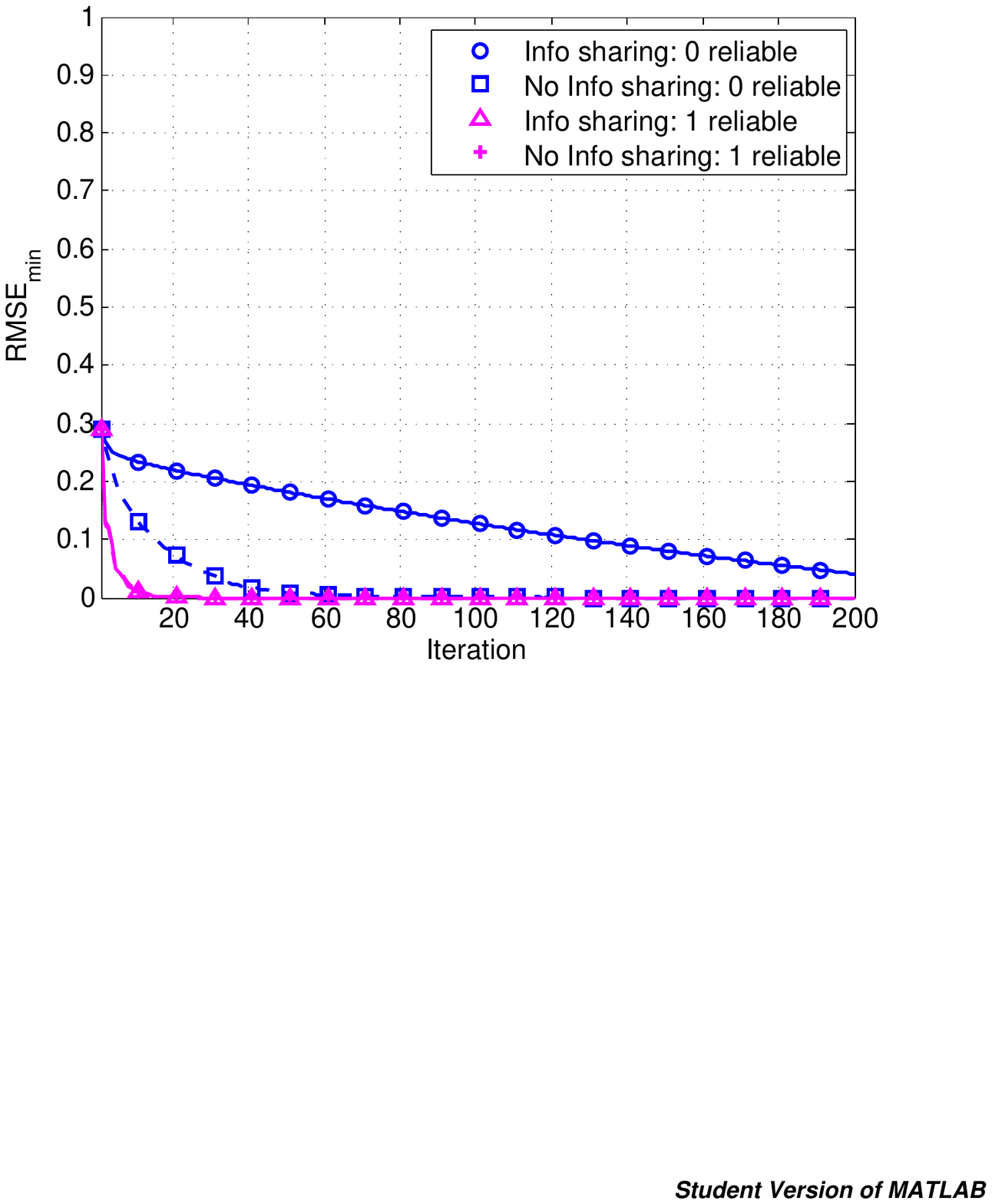}
	\caption{ Same as in Fig. \ref{fig:average_RMSE_p0_05} except we show the best-case RMSE, computed using (\ref{eq:minRMSE}). }
		\label{fig:min_RMSE_p0_05}
		}
\end{figure}

Figures \ref{fig:RMSE_p0_05_0good} and \ref{fig:RMSE_p0_05_1good} compare the decentralized algorithm's performance to the centralized one, for the cases of $M_{\text{rel}}=0$ (no reliable agents) and $M_{\text{rel}}=1$ (a single reliable agent), respectively. We note that the best-case RMSE performance of the decentralized algorithm when $M_{\text{rel}}=1$ is comparable with the performance of the centralized algorithm. On the other hand, with no reliable agents ($M_{\text{rel}}=0$), the centralized Bayesian solution provides a significant performance advantage. Under the conditions of this experiment, the presence of a single reliable agent yields a significant performance gain.
\begin{figure}[ht]
	\centering
		\includegraphics[width=0.50\textwidth]{./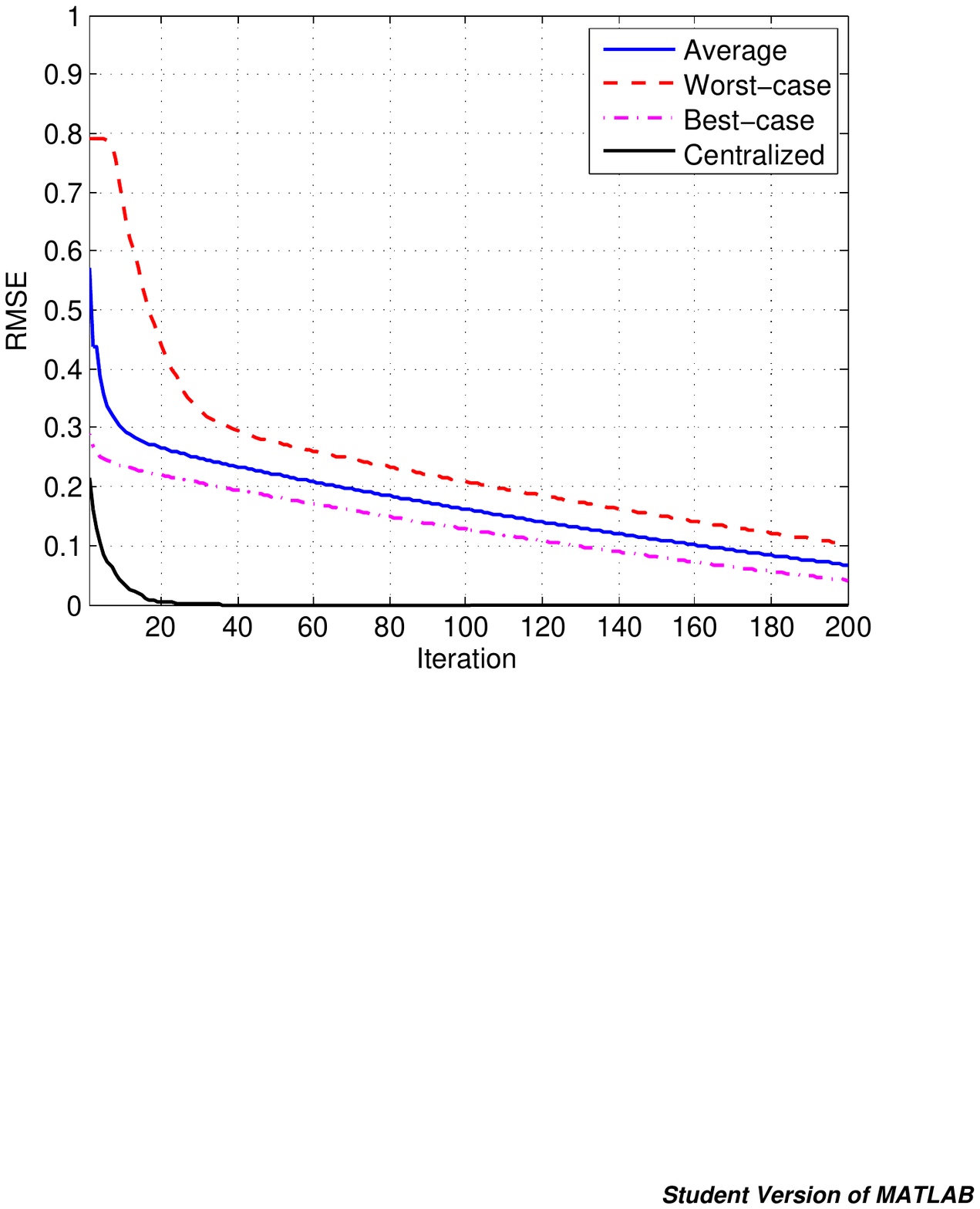}
	\caption{ Average RMSE performance as a function of algorithm iteration. Here, all agents are unreliable; i.e., have an error probability $\epsilon_i=0.45$. The centralized 20 questions strategy (bottom curve) is significantly better than the proposed decentralized strategy with information sharing, in terms of best, average and worst case RMSE, since there are no reliable agents. The average, worst-case and best-case RMSE was computed using (\ref{eq:minRMSE})-(\ref{eq:avgRMSE}), and is further averaged over 10 realizations of irreducible Erd\"{o}s-R\'{e}nyi graphs drawn from $\mathcal{G}^{IR}(100,0.05)$. }
		\label{fig:RMSE_p0_05_0good}
\hfill
	\centering
	\center{
		\includegraphics[width=0.50\textwidth]{./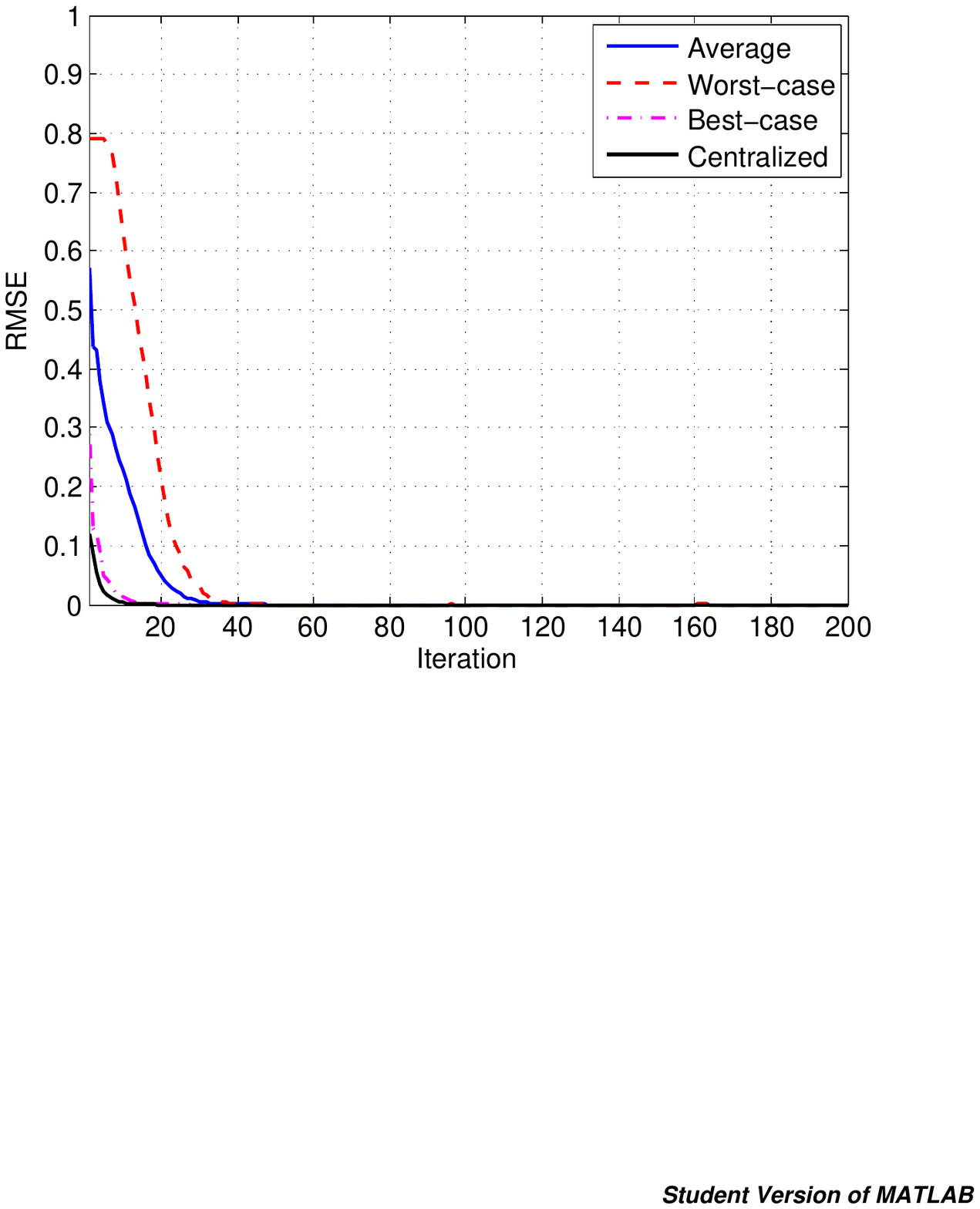}
	\caption{ Same as Fig. \ref{fig:RMSE_p0_05_0good} except that there is a single reliable agent with error probability $\epsilon_1=0.05$ and the rest of the agents are all unreliable with error probability $\epsilon_i=0.45$. As compared to Fig. \ref{fig:RMSE_p0_05_0good}, the presence of even one reliable agent makes the proposed decentralized information sharing algorithm have performance close to that of the centralized algorithm. }
		\label{fig:RMSE_p0_05_1good}
		}
\end{figure}

\section{Conclusion} \label{sec:conclusions}
We proposed a solution to the problem of decentralized 20 questions with noise and illustrated several benefits of information sharing as compared to no information sharing. At each iteration of our proposed decentralized information sharing algorithm, agents query and respond based on their local beliefs and average information through their neighbors. 
Asymptotic convergence properties of the agents' posterior distributions were derived, showing that they reach consensus to the true state. Numerical experiments were presented to validate the convergence properties of the algorithm.

We note that lemmas \ref{lemma:lemmaA}, \ref{lemma:lemmaB} and \ref{lemma:lemmaD} hold for any dimension $d\geq 1$, while the rest of the analysis in the paper holds for a scalar target state. Generalization of Lemma \ref{lemma:lemmaC} to state dimension $d>1$ is an open problem, see the discussion in Waeber, et al., \cite{Waeber:2013} in the context of extending the convergence theory of the probabilistic bisection algorithm (PBA) to higher dimensions. Several other interesting open problems  arise from this decentralized 20 questions algorithm including analysis of the rate of convergence and tuning of the interaction matrix weights in some optimal manner to improve estimation performance.

\newpage

\appendices

\section{Proof of Lemma \ref{lemma:lemmaA}}
\begin{proof}
	Without loss of generality, fix $i \in \mathcal{N}$. From direct substitution and integration, we have:
	\begin{align*}
		\int_B &[\bD_t(x)]_{i,i} p_{i,t}(x) dx \\
			&= a_{i,i} \left( \frac{\int_B l_i(y_{i,t+1}|x,A_{i,t}) p_{i,t}(x) dx}{\mathcal{Z}_{i,t}(y_{i,t+1})} - \int_B p_{i,t}(x) dx \right) \\
			&= a_{i,i} \left(2 \int_B l_i(y_{i,t+1}|x,A_{i,t}) p_{i,t}(x) dx - \PP_{i,t}(B) \right) \\
	\end{align*}
	where we used the fact that $\mathcal{Z}_{i,t}(y)=1/2$ for all $y\in \mathcal{Y}$. This follows from the probabilistic bisection property:
	\begin{align*}
		&\mathcal{Z}_{i,t}(y) \\
			&= \int_{\mathcal{X}} p_{i,t}(x) \left( f_1^{(i)}(y) I(x\in A_{i,t}) + f_0^{(i)}(y) I(x\notin A_{i,t}) \right) dx \\
			&= f_1^{(i)}(y) \PP_{i,t}(A_{i,t}) + f_0^{(i)}(y) \PP_{i,t}(A_{i,t}^c) \\
			&= f_1^{(i)}(y) (1/2) + f_0^{(i)}(y) (1/2) \\
			&= 1/2
	\end{align*}
	where we used the fact $f_1^{(i)}(y) + f_0^{(i)}(y) = 1$. From the definition of $l_i(y|x,A_{i,t})$, it follows that:
	\begin{align*}
		&\int_B [\bD_t(x)]_{i,i} p_{i,t}(x) dx = a_{i,i} \Bigg(2 \Big(f_1^{(i)}(y_{i,t+1}) \PP_{i,t}(B\cap A_{i,t}) \\
			&\quad + f_0^{(i)}(y_{i,t+1}) \PP_{i,t}(B \cap A_{i,t}^c) \Big) - \PP_{i,t}(B) \Bigg)
	\end{align*}
	Taking the conditional expectation of both sides, we obtain:
	\begin{align*}
		&\EE\left[\int_B [\bD_t(x)]_{i,i} p_{i,t}(x) dx\Bigg|\mathcal{F}_t\right] \\
			&= a_{i,i} \Bigg(2 \EE\Big[ f_1^{(i)}(Y_{i,t+1}) \PP_{i,t}(B\cap A_{i,t}) \\
			&\quad + f_0^{(i)}(Y_{i,t+1}) \PP_{i,t}(B \cap A_{i,t}^c) |\mathcal{F}_t\Big] - \PP_{i,t}(B) \Bigg) \\
			&= a_{i,i} \Bigg(2 \sum_{y=0}^1 \left( f_1^{(i)}(y) \PP_{i,t}(B\cap A_{i,t}) + f_0^{(i)}(y) \PP_{i,t}(B \cap A_{i,t}^c) \right) \\
			&\quad \times \PP(Y_{i,t+1}=y|\mathcal{F}_t) - \PP_{i,t}(B) \Bigg) \\
			&= a_{i,i} \Bigg(\sum_{y=0}^1 \left( f_1^{(i)}(y) \PP_{i,t}(B\cap A_{i,t}) + f_0^{(i)}(y) \PP_{i,t}(B \cap A_{i,t}^c) \right) \\
			&\qquad - \PP_{i,t}(B) \Bigg) \\
			&= a_{i,i} \left( \left( \PP_{i,t}(B\cap A_{t,i}) + \PP_{i,t}(B \cap A_{i,t}^c) \right) - \PP_{i,t}(B) \right) \\
			&= a_{i,i} \left( \PP_{i,t}(B) - \PP_{i,t}(B) \right) = 0
	\end{align*}
	where we used the fact that under the probabilistic bisection, $\PP(Y_{i,t+1}=y|\mathcal{F}_t)=1/2$ for all $y$. This follows from:
	\begin{align*}
		\PP(&Y_{i,t+1}=y|\mathcal{F}_t) \\
			&= \sum_{z=0}^1 \PP(Y_{i,t+1}=y|Z_{i,t}=z,\mathcal{F}_t) \PP(Z_{i,t}=z|\mathcal{F}_t) \\
			&= \PP(Y_{i,t+1}=y|Z_{i,t}=0) \PP(Z_{i,t}=0|\mathcal{F}_t) \\
			&\quad + \PP(Y_{i,t+1}=y|Z_{i,t}=1) \PP(Z_{i,t}=1|\mathcal{F}_t)  \\
			&= f_0(y) \PP(X^* \notin A_{i,t} |\mathcal{F}_t) + f_1(y) \PP(X^* \in A_{i,t}|\mathcal{F}_t) \\
			&= f_0(y) \PP_{i,t}(A_{i,t}^c) + f_1(y) \PP_{i,t}(A_{i,t}) \\
			&= 1/2
	\end{align*}
	Since $i$ was arbitrarily chosen, the proof is complete.
\end{proof}

\section{Proof of Lemma \ref{lemma:lemmaB}}
\begin{proof}
	From strong connectivity (i.e., Assumption 3), it follows that $\bA$ is an irreducible stochastic matrix. Thus, there exists a left eigenvector $\bv \in \RR^M$ with strictly positive entries corresponding to a unit eigenvalue-i.e., $\bv^T=\bv^T\bA$ \cite{Berman:1979}.

	Integrating (\ref{eq:density_update_matrix}) and left-multiplying by $\bv^T$:
	\begin{align}
		&\quad \bv^T\int_B \bp_{t+1}(x)dx = \bv^T\bA \int_B \bp_t(x) dx + \bv^T \int_B \bD_t(x)\bp_t(x) dx \nonumber \\
		&\Leftrightarrow \bv^T \bP_{t+1}(B) = \bv^T \bP_t(B) + \sum_{i=1}^M v_i \int_B [\bD_t(x)]_{i,i} p_{i,t}(x) dx \label{eq:eq1}
	\end{align}
	Taking the conditional expectation of both sides and using Lemma \ref{lemma:lemmaA}, we obtain $\EE[\bv^T\bP_{t+1}(B)|\mathcal{F}_t] = \bv^T\bP_t(B)$. Thus, the process $\{\bv^T\bP_t(B):t\geq 0\}$ is a martingale with respect to the filtration $\mathcal{F}_t$. We note that it is bounded below by zero and above by $\nn \bv \nn_1$ almost surely. From the martingale convergence theorem \cite{Billingsley:2012}, it follows that it converges almost surely.	
\end{proof}

\section{Proof of Lemma \ref{lemma:lemmaC}}
\begin{proof}
	Define the tilted measure variable 
	\begin{equation*}
		\zeta_{t}(B) = \exp(\bv^T\bP_t(B)).
	\end{equation*}
	From Lemma \ref{lemma:lemmaB} and Jensen's inequality, it follows that
	\begin{equation*}
		\EE[\zeta_{t+1}(B)|\mathcal{F}_t] \geq \zeta_t(B)
	\end{equation*}
	so the process $\{\zeta_t(B):t\geq 0\}$ is a submartingale with respect to the filtration $\mathcal{F}_t$. From the proof of Lemma \ref{lemma:lemmaB}, it follows that $\zeta_t(B)$ is bounded a.s., so by the martingale convergence theorem \cite{Billingsley:2012}, it follows that $\lim_{t\to\infty} \zeta_t(B)$ exists and is finite almost surely. As a result, we have from Lemma \ref{lemma:lemmaB} and (\ref{eq:eq1}):
	\begin{equation*}
		\lim_{t\to\infty} \frac{\zeta_{t+1}(B)}{\zeta_t(B)} \stackrel{a.s.}{=} 1 \stackrel{a.s.}{=} \lim_{t\to\infty} \exp\left( \bv^T \int_B \bD_t(x)\bp_t(x) dx \right)
	\end{equation*}
	Since the variables in the limit on the RHS are bounded a.s., i.e., 
	\begin{align*}
		&\left|v^T\int_B \bD_t(x) \bp_t(x) dx\right| \\
			&\quad \leq \nn v\nn_1 \max_i\left|\int_B [\bD_t(x)]_{i,i} p_{i,t}(x) dx\right| \\
			&\quad \leq \nn v\nn_1 \max_i (2(1-\epsilon_i) \PP_{i,t}(B) - \PP_{i,t}(B)) \\
			&\quad \leq \nn v\nn_1 (1-2 \min_i \epsilon_i) \leq \nn v\nn_1 < \infty,
	\end{align*}
	the dominated convergence theorem for conditional expectations \cite{Durrett:2005} implies, by changing the order of conditional expectation and the limit:
	\begin{equation} \label{eq:eq2}
		\EE\left[ \exp\left( \bv^T \int_B \bD_t(x)\bp_t(x) dx \right) \Bigg| \mathcal{F}_t \right] \stackrel{a.s.}{\longrightarrow} 1
	\end{equation}
	as $t\to\infty$. Substituting the definition of $\bD_t(x)$ into (\ref{eq:eq2}) and using Assumption 1, it follows after some algebra that (\ref{eq:eq2}) is equivalent to:
	\begin{equation} \label{eq:eq3}
		\prod_{i=1}^M \frac{\EE\left[ \exp\left( v_i a_{i,i} \int_B 2l_i(Y_{i,t+1}|x,A_{i,t}) p_{i,t}(x) dx \right) \Bigg| \mathcal{F}_t \right]}{\exp(v_i a_{i,i} \PP_{i,t}(B))} \stackrel{a.s.}{\longrightarrow} 1
	\end{equation}
	Next, we analyze the ratio of exponentials for two separate cases. First, consider the case $\PP_{i,t}([0,b])=\int_{0}^b p_{i,t}(x) dx\leq 1/2$. Using the definition of $\hat{X}_{i,t}$, it follows that $b\leq \hat{X}_{i,t}$. This implies that $l_i(y|x,A_{i,t})=f_1^{(i)}(y)$ for all $x\leq b$. Using this fact and $\PP(Y_{i,t+1}=y|\mathcal{F}_t)=1/2$:
	\begin{align}
		&\frac{\EE\left[ \exp\left( v_i a_{i,i} \int_B 2l_i(Y_{i,t+1}|x,A_{i,t}) p_{i,t}(x) dx \right) \Bigg| \mathcal{F}_t \right]}{\exp(v_i a_{i,i} \PP_{i,t}(B))} \nonumber \\
		&= \frac{1}{2} \frac{\exp\left(v_i a_{i,i} 2(1-\epsilon_i) \PP_{i,t}(B)\right) + \exp\left(v_i a_{i,i} 2 \epsilon_i \PP_{i,t}(B)\right)}{\exp\left(v_i a_{i,i} \PP_{i,t}(B)\right)} \nonumber \\
		&= \frac{1}{2} \Big(\exp\left(v_i a_{i,i} (1-2\epsilon_i) \PP_{i,t}(B)\right)  \nonumber \\
		&\qquad + \exp\left(-v_i a_{i,i} (1-2\epsilon_i) \PP_{i,t}(B)\right) \Big)      \nonumber \\
		&= \cosh\left(v_i a_{i,i} (1-2\epsilon_i) \PP_{i,t}(B) \right) \label{eq:eq3_caseA}
	\end{align}
	where we used the fact that $(e^a+e^{-a})/2=\cosh(a)$. Second, consider the complementary case $\PP_{i,t}([0,b])>1/2$. In this case, we have $b>\hat{X}_{i,t}$ and as a result:
	\begin{align*}
		&\int_0^b 2l_i(Y_{i,t+1}|x,A_{t,i}) p_{i,t}(x) dx \\
			&= \int_0^{\hat{X}_{i,t}} 2f_1^{(i)}(Y_{i,t+1}) p_{i,t}(x) dx + \int_{\hat{X}_{i,t}}^b 2f_0^{(i)}(Y_{i,t+1}) p_{i,t}(x) dx \\
			&= 2 f_1^{(i)}(Y_{i,t+1}) \PP_{i,t}(A_{i,t}) + 2 f_0^{(i)}(Y_{i,t+1}) (\PP_{i,t}(B) - \PP_{i,t}(A_{i,t})) \\
			&= f_1^{(i)}(Y_{i,t+1}) + f_0^{(i)}(Y_{i,t+1}) (2\PP_{i,t}(B)-1) \\
			&= \left\{ \begin{array}{ll} (1-2\epsilon_i)+2\epsilon_i \PP_{i,t}(B), & Y_{i,t+1}=1 \\ 2(1-\epsilon_i) \PP_{i,t}(B) + (2\epsilon_i-1), & Y_{i,t+1}=0 \end{array}  \right.
	\end{align*}
	Using this result and $\PP(Y_{i,t+1}=y|\mathcal{F}_t)=1/2$:
	\begin{align}
		&\frac{\EE\left[ \exp\left( v_i a_{i,i} \int_B 2l_i(Y_{i,t+1}|x,A_{i,t}) p_{i,t}(x) dx \right) \Bigg| \mathcal{F}_t \right]}{\exp(v_i a_{i,i} \PP_{i,t}(B))} \nonumber \\
		&= \frac{1}{2} \frac{1}{\exp\left(v_i a_{i,i} \PP_{i,t}(B)\right)} \times \nonumber \\
		&\quad \Big( \exp\left(v_i a_{i,i} ((1-2\epsilon_i)+2\epsilon_i \PP_{i,t}(B))\right) \nonumber \\
		&\quad +\exp\left(v_i a_{i,i} (2(1-\epsilon_i) \PP_{i,t}(B) + (2\epsilon_i-1)) \right) \Big) \nonumber \\
		&= \frac{1}{2} \Big(\exp\left(v_i a_{i,i} (1-2\epsilon_i) (1-\PP_{i,t}(B))\right) \nonumber \\
		&\qquad + \exp\left(-v_i a_{i,i} (1-2\epsilon_i) (1-\PP_{i,t}(B))\right) \Big) \nonumber \\
		&= \cosh\left(v_i a_{i,i} (1-2\epsilon_i) \PP_{i,t}(B^c) \right) \label{eq:eq3_caseB}
	\end{align}
	Combining the two cases (\ref{eq:eq3_caseA}) and (\ref{eq:eq3_caseB}) by noting that
	\begin{equation*}
		\min\left\{\PP_{i,t}(B),1-\PP_{i,t}(B)\right\}= \left\{ \begin{array}{ll} \PP_{i,t}(B), & \PP_{i,t}(B)\leq 1/2 \\ 1-\PP_{i,t}(B), & \PP_{i,t}(B)>1/2  \end{array} \right. ,
	\end{equation*}
	we have:
	\begin{align*}
		&\frac{\EE\left[ \exp\left( v_i a_{i,i} \int_B 2l_i(Y_{i,t+1}|x,A_{i,t}) p_{i,t}(x) dx \right) \Bigg| \mathcal{F}_t \right]}{\exp(v_i a_{i,i} \PP_{i,t}(B))} \\
		&= \cosh\left(v_i a_{i,i} (1-2\epsilon_i) \min\left\{ \PP_{i,t}(B),1-\PP_{i,t}(B) \right\} \right)
	\end{align*}
  Substituting this expression into (\ref{eq:eq3}), we obtain:
	\begin{equation*} 
		\prod_{i=1}^M  \cosh\left(v_i a_{i,i} (1-2\epsilon_i) \min\left\{\PP_{i,t}(B),1-\PP_{i,t}(B)\right\} \right) \stackrel{a.s.}{\longrightarrow} 1
	\end{equation*}
	Since $1=\cosh(0) \leq \cosh(x)$ for all $x\in \RR$, it follows that $\min\{\PP_{i,t}(B),1-\PP_{i,t}(B)\} \to 0$ almost surely. Note that here we used the positivity of the $v_i$ and the self-reliances $a_{i,i}$ (i.e., Assumption 3) along with the fact that $\epsilon_i < 1/2$. The proof is complete.
\end{proof}

\section{Proof of Lemma \ref{lemma:lemmaD}}
\begin{proof}
Integrating both sides of the recursion (\ref{eq:density_update_matrix}):
\begin{equation} \label{eq:update_B}
	\bP_{t+1}(B) = \bA \bP_t(B) + \bd_{t+1}(B)
\end{equation}
Unrolling (\ref{eq:update_B}) over $r$ steps:
\begin{equation}
	\bP_{t+r}(B) = \bA^r \bP_t(B) + \sum_{k=0}^{r-1} \bA^k \bd_{t+r-k}(B)
\end{equation}
Since $\bA$ is a stochastic matrix, Proposition \ref{prop:Amat} implies:
\begin{align*}
	&V_{t+r}(B) \\
		&= \max_i \PP_{i,t+r}(B) - \min_i \PP_{i,t+r}(B) \\
		&\leq \tau_1(\bA^r) V_t(B) \\
		&\qquad + \max_{i,j} \sum_{k=0}^{r-1} \left( [\bA^k \bd_{t+r-k}(B)]_i - [\bA^k \bd_{t+r-k}(B)]_j \right) \\
		&\leq \tau_1(\bA^r) V_t(B) \\
		&\qquad + \sum_{k=0}^{r-1} \left( \max_i [\bA^k \bd_{t+r-k}(B)]_i - \min_i [\bA^k \bd_{t+r-k}(B)]_i \right) \\
		&\leq \tau_1(\bA^r) V_t(B) \\
		&\qquad + \sum_{k=0}^{r-1} \left( \max_i d_{i,t+r-k}(B) - \min_i d_{i,t+r-k}(B) \right)
\end{align*}
It is known that the coefficient of ergodicity $\tau_1(\bA^r) \in [0,1]$ for any $r\in \NN$ \cite{Ipsen:2011, Seneta:1981}. The irreducibility of the matrix $\bA$ implies the existence of a positive $r$ such that $\tau_1(\bA^r)<1$ \cite{Seneta:1981}.
\end{proof}

\section{Proof of Theorem \ref{thm:thmA}}
\begin{proof}
Without loss of generality, we consider the case $r=1$ in Lemma \ref{lemma:lemmaD}. The case $r>1$ follows similarly. From Lemma \ref{lemma:lemmaD}, we obtain:
\begin{align}
	\EE&[V_{t+1}(B)|\mathcal{F}_t] \leq \tau_1(\bA) V_t(B) \nonumber \\
		&\quad + \EE\left[ \max_i d_{i,t+1}(B) - \min_i d_{i,t+1}(B) \Bigg| \mathcal{F}_t \right]  \label{eq:expA}
\end{align}
where $\tau_1(\bA)<1$. To continue, we need to show that the remainder is asymptotically negligible-i.e.,
\begin{equation*}
	\EE\left[ \max_i d_{i,t+1}(B) - \min_i d_{i,t+1}(B) \Bigg| \mathcal{F}_t \right] \to 0.
\end{equation*}

Using Proposition \ref{prop:lse}, we obtain for any $k>0$:
\begin{align}
	\EE &\left[ \max_i d_{i,t+1}(B) - \min_i d_{i,t+1}(B) \Bigg| \mathcal{F}_t \right] \nonumber \\
	&\leq \frac{1}{k} \EE\Bigg[ \log\left(\sum_{i=1}^M \exp(k d_{i,t+1}(B))\right) \nonumber \\
	&\qquad + \log\left(\sum_{i=1}^M \exp(-k d_{i,t+1}(B))\right) \Bigg| \mathcal{F}_t \Bigg] \nonumber \\
	&\leq \frac{1}{k} \Bigg[ \log\left(\sum_{i=1}^M \EE[\exp(k d_{i,t+1}(B))|\mathcal{F}_t] \right) \nonumber \\
	&\qquad + \log\left(\sum_{i=1}^M \EE[\exp(-k d_{i,t+1}(B))|\mathcal{F}_t] \right) \Bigg] \label{eq:expB}
\end{align}
where we used Jensen's inequality and the linearity of expectation.

Using similar analysis as in the proof of Lemma \ref{lemma:lemmaC}, the (conditional) moment generating functions of the innovation terms can be written as hyperbolic cosines:
\begin{align*}
	\EE[\exp\left(k d_{i,t+1}(B)\right)|\mathcal{F}_t] &= \cosh\left( k a_{i,i} (1-2\epsilon_i) \mu_{i,t}(b) \right) \\
	\EE[\exp\left(-k d_{i,t+1}(B)\right)|\mathcal{F}_t] &= \cosh\left( -k a_{i,i} (1-2\epsilon_i) \mu_{i,t}(b) \right)
\end{align*}
where $\mu_{i,t}(b) = \min\left\{F_{i,t}(b),1-F_{i,t}(b)\right\}$.

Using the even symmetry of the $\cosh(\cdot)$ function, substituting these expressions into (\ref{eq:expB}) and using Proposition \ref{prop:lse} again, we obtain:
\begin{align*}
	&\EE\left[ \max_i d_{i,t+1}(B) - \min_i d_{i,t+1}(B) \Bigg| \mathcal{F}_t \right] \\
		&\leq \frac{2}{k} \log\left(\sum_{i=1}^M \cosh\left( k a_{i,i} (1-2\epsilon_i) \mu_{i,t}(b) \right) \right) \\
		&\leq \frac{2}{k} \log\left(\sum_{i=1}^M \exp\left( k a_{i,i} (1-2\epsilon_i) \mu_{i,t}(b) \right) \right) \\
		&\leq 2 \max_i \left\{ a_{i,i} (1-2\epsilon_i) \mu_{i,t}(b) \right\} + \frac{\log M}{k}
\end{align*}
Taking the limit $k\to\infty$ to tighten the bound and using (\ref{eq:expA}):
\begin{align}
	\EE&[V_{t+1}(B)|\mathcal{F}_t] \leq \tau_1(\bA) V_t(B) + \delta_t  \label{eq:expC}
\end{align}
where $\delta_t := 2 \max_i \left\{ a_{i,i} (1-2\epsilon_i) \mu_{i,t}(b) \right\}$. Lemma \ref{lemma:lemmaC} implies that $\mu_{i,t}(b) \stackrel{a.s.}{\to} 0$, for all $i\in \mathcal{N}$. As a result, we have $\delta_t \stackrel{a.s.}{\to} 0$ as $t\to\infty$.

Taking the unconditional expectation of both sides in (\ref{eq:expC}):
\begin{equation} \label{eq:expD}
	\EE[V_{t+1}(B)] \leq \tau_1(\bA) \EE[V_t(B)] + \EE[\delta_t]
\end{equation}
where $\EE[\delta_t] \to 0$ by the dominated convergence theorem. Using induction on (\ref{eq:expD}), we obtain for all $t\geq 0$:
\begin{equation} \label{eq:expE}
	\EE[V_t(B)] \leq \tau_1(\bA)^t \EE[V_0(B)] + \sum_{l=0}^{t-1} \tau_1(\bA)^l \EE[\delta_{t-1-l}]
\end{equation}
Taking the limits of both sides of (\ref{eq:expE}) and using the fact that $\tau_1(\bA)<1$ and $\EE[V_0(B)]<\infty$:
\begin{align*}
	\limsup_{t\to\infty} &\EE[V_t(B)] \leq \left( \lim_{t\to\infty}\tau_1(\bA)^t \right) \EE[V_0(B)] \nonumber \\
		&+ \lim_{t\to\infty} \sum_{l=0}^{t-1} \tau_1(\bA)^l \EE[\delta_{t-1-l}] = 0
\end{align*}
It follows that $\EE[V_t(B)] \to 0$ since $V_t(B)$ is always nonnegative. Markov's inequality further implies $V_t(B)\stackrel{i.p.}{\to} 0$. The proof is complete.
\end{proof}

\section{Proof of Lemma \ref{lemma:lemmaE}}
\begin{proof}
	From (\ref{eq:density_update}), we evaluate at $x=X^*$ and obtain:
	\begin{align*}
		&p_{i,t+1}(X^*) \\
		  &= a_{i,i} p_{i,t}(X^*) \left( \frac{l_i(Y_{i,t+1}|X^*,A_{i,t})}{\mathcal{Z}_{i,t}(Y_{i,t+1})} \right) + \sum_{j \neq i} a_{i,j} p_{j,t}(X^*) \\
			&= a_{i,i} p_{i,t}(X^*) \left( 2 \PP(Y_{i,t+1}|Z_{i,t}) \right) + \sum_{j \neq i} a_{i,j} p_{j,t}(X^*) \\
	\end{align*}
	where $Z_{i,t}=I(X^*\in A_{i,t})$ is the query input to the noisy channel and $\PP(Y_{i,t+1}|Z_{i,t})$ models the binary symmetric channel for the $i$th agent. Taking the logarithm of both sides and using Jensen's inequality, we obtain for each agent $i$:
	\begin{align*}
		\log p_{i,t+1}(X^*) &\geq \sum_{j=1}^M a_{i,j} \log p_{j,t}(X^*) \\
			&\quad + a_{i,i} \log\left(2 \PP(Y_{i,t+1}|Z_{i,t}) \right) 
	\end{align*}
	Writing this in vector form with the understanding that the logarithm of a vector is taken component-wise:
	\begin{equation} \label{eq:lb_relation}
		\log \bp_{t+1}(X^*) \succeq \bA \log \bp_{t}(X^*) + \diag(\bA) \log \bu_{t+1}
	\end{equation}
	where the vector $\bu_{t+1}$ is given component-wise by $[\bu_{t+1}]_i=2 \PP(Y_{i,t+1}|Z_{i,t})$. Left-multiplying (\ref{eq:lb_relation}) by $\bv^T$ and using the eigenrelation $\bv^T=\bv^T\bA$, we obtain:
	\begin{equation} \label{eq:bnd_recursion}
		\bv^T \log\bp_{t+1}(X^*) \geq \bv^T \log\bp_t(X^*) + \bv^T \diag(\bA) \log \bu_{t+1}
	\end{equation}
	Using induction on (\ref{eq:bnd_recursion}), we obtain:
	\begin{equation*}
		\bv^T \log\bp_t(X^*) \geq \bv^T\log\bp_0(X^*) + \sum_{k=0}^{t-1} \bv^T \diag(\bA) \log\bu_{k+1}
	\end{equation*}
	This implies by the strong law of large numbers (LLN):
	\begin{align*}
		&\liminf_{t\to\infty} \frac{1}{t} \bv^T \log\bp_t(X^*) \\
			&\geq \lim_{t\to\infty} \frac{1}{t} \bv^T \log\bp_0(X^*) + \lim_{t\to\infty} \frac{1}{t} \sum_{k=0}^{t-1} \bv^T \diag(\bA) \log \bu_{k+1} \\
			&= \EE\left[ \sum_{i=1}^M v_i a_{i,i} \log(2 \PP(Y_i|Z_i)) \right] \\
			&= \sum_i v_i a_{i,i} \EE\left[ \log(2 \PP(Y_i|Z_i)) \right]
	\end{align*}
	To finish the proof, note:
	\begin{align*}
		\EE&\left[ \log_2(2\PP(Y_i|Z_i)) \right] \\
			&= \sum_{Z_i} \PP(Z_i) \sum_{Y_i} \PP(Y_i|Z_i) \log_2(2 \PP(Y_i|Z_i)) \\
			&= \sum_{Z_i} \PP(Z_i) \left( (1-\epsilon_i)\log_2(2(1-\epsilon_i)) + \epsilon_i \log_2(2\epsilon_i)  \right) \\
			&= 1- h_B(\epsilon_i) = C(\epsilon_i)
	\end{align*}
\end{proof}

\section{Proof of Theorem \ref{thm:thmB}}
\begin{proof}
	From Theorem \ref{thm:thmA} we obtain for each agent $i$,
	\begin{equation} \label{eq:thm2}
		\PP_{i,t}([0,b]) \stackrel{i.p.}{\longrightarrow} \PP_\infty(B)
	\end{equation}
	as $t\to\infty$, where $\PP_{\infty}(B)$ is a common limiting random variable. To finish the proof, we show that $\PP_\infty(B)$ is the constant $I(b>X^*)$. Lemma \ref{lemma:lemmaE} implies that for $t$ large (as $t\to\infty$)\footnote{The notation $a_n=\Omega(b_n)$ implies $a_n \geq K b_n$ for infinitely many $n$ and for some positive constant $K$.}:
	\begin{equation*}
		\sum_i v_i a_{i,i} \log(p_{i,t}(X^*)) = \Omega(t)
	\end{equation*}
	which implies $\sum_i v_i a_{i,i} \log(p_{i,t}(X^*)) \stackrel{a.s.}{\longrightarrow} +\infty$. This further implies that there exists an agent $i_0$ such that $p_{i_0,t}(X^*) \to\infty$ almost surely.
	
	Lemma \ref{lemma:lemmaC} implies $\mu_{i_0,t}(b') = \min\{F_{i_0,t}(b'), 1-F_{i_0,t}(b')\} \stackrel{a.s.}{\to} 0$ for any $b'\in [0,1]$. This asymptotic result, combined with the monotonicity of the CDF operator $F_{i_0,t}(\cdot)$ and $p_{i_0,t}(X^*) \stackrel{a.s.}{\to} \infty$ imply that $F_{i_0,t}(b) \to I(b>X^*)$. From (\ref{eq:thm2}), it then follows that $F_{i,t}(b) \stackrel{i.p.}{\rightarrow} F_{\infty}(b)=I(b>X^*)$ for all $i \in \mathcal{N}$.
	
	To conclude the proof, we show the conditional mean estimators $\check{X}_{i,t}$ converge to the correct target state $X^*$ in probability (i.e., consistency). From the definition of the conditional expectation, we obtain:
	\begin{align*}
		\check{X}_{i,t} &= \int_{u=0}^1 \PP_{i,t}((u,1]) du \\
			&= 1 - \int_{u=0}^1 F_{i,t}(u) du
	\end{align*}
	where the random variables $F_{i,t}(u)$ are uniformly bounded in $[0,1]$. To finish the proof it suffices to show
	\begin{equation*}
		\int_{u=0}^1 F_{i,t}(u) du \stackrel{i.p.}{\longrightarrow} \int_{u=0}^1 F_{\infty}(u) du
	\end{equation*}
	since $\int_{u=0}^1 F_{\infty}(u) du=1-X^*$. This is accomplished by a variant of the dominated convergence theorem, where the limits are taken in probability. We prove this here for completeness. The first part of the theorem implies
	\begin{equation} \label{eq:convF}
		\limsup_{t\to\infty} \left| F_{i,t}(u) - F_\infty(u)  \right| \stackrel{i.p.}{=} 0
	\end{equation}
	for each $u\in [0,1]\backslash X^*$. Also, we have with probability 1:
	\begin{equation} \label{eq:bndF}
		|F_{i,t}(u)-F_\infty(u)| \leq 2
	\end{equation}
	for all $u\in [0,1]\backslash X^*$ and all $t$. The reverse Fatou lemma along with (\ref{eq:bndF}) and (\ref{eq:convF}) imply:
	\begin{align*}
		\limsup_{t\to\infty}& \int_{0}^1 \left| F_{i,t}(u) - F_\infty(u)  \right| du \\
			&\leq \int_0^1 \limsup_{t\to\infty}\left| F_{i,t}(u) - F_\infty(u)  \right| du \stackrel{i.p.}{=} 0.
	\end{align*}
	Thus, we conclude that:
	\begin{equation*}
		\lim_{t\to\infty} \int_{0}^1 \left| F_{i,t}(u) - F_\infty(u)  \right| du \stackrel{i.p.}{=} 0.
	\end{equation*}
	This concludes the proof.
\end{proof}


%


\bibliographystyle{IEEEtran}
\bibliography{refs}

\end{document}